\newcommand{\mathset}[1]{\ensuremath {\mathbb {#1}}}
\newcommand{\N}{\mathset {N}}
\newcommand{\R}{\mathset {R}}
\newcommand{\EX}{\mathbf {E}}
\newcommand{\eqdef}{:=}
\newcommand{\DKL}{D_\textup{KL}}
\newtheorem{theorem}{Theorem}[section]
\newtheorem{lemma}[theorem]{Lemma}
\newtheorem{corollary}[theorem]{Corollary}
\newtheorem{claim}[theorem]{Claim}
\title{Five Proofs of Chernoff's Bound with 
Applications\footnote{Supported in part by DFG Grants MU 3501/1 and
MU 3501/2 and ERC StG 757609.}}
\author{Wolfgang Mulzer}
\institution{Institut f\"ur Informatik\\ Freie Universit\"at Berlin\\
E-Mail: \texttt{mulzer@inf.fu-berlin.de}}
\begin{document}

\maketitle

\begin{abstract}
We discuss five ways of proving Chernoff's bound and show
how they lead to different extensions of the basic bound.
\end{abstract}

\section{Introduction}

Chernoff's bound gives an estimate on the probability
that a sum of independent Binomial random variables
deviates from its expectation~\cite{Hoeffding63}.
It has many variants and extensions that are known under various 
names such as
\emph{Bernstein's inequality} or \emph{Hoeffding's 
bound}~\cite{Bernstein64,Hoeffding63}.
Chernoff's bound is one of the most basic and versatile
tools in the life of a theoretical computer 
scientist,
with a seemingly endless amount of applications.
Almost every contemporary textbook on algorithms or 
complexity theory contains a statement and a proof of the 
bound~\cite{AroraBa09,Goldreich08,KleinbergTa06,CormenLeRiSt09}, 
and there are several texts that discuss its various applications 
in great detail (e.g., the textbooks by 
Alon and Spencer~\cite{AlonSp16}, 
Dubhashi and Panchonesi~\cite{DubhashiPa09},
Mitzenmacher and Upfal~\cite{MitzenmacherUp17},
Motwani and Raghavan~\cite{MotwaniRa95}, 
or the articles by 
Chung and Lu~\cite{ChungLu06},
Hagerup and R\"ub~\cite{HagerupRu90}, or
McDiarmid~\cite{McDiarmid98}).

In the present survey, we will see five different
ways of proving the basic Chernoff bound. The different techniques
used in these proofs allow various generalizations and extensions,
some of which we will also discuss.

\section{The Basic Bound}

We begin with a statement of the basic Chernoff bound.
For this, we first need a notion from information theory~\cite{CoverTo06}.
Let $P = (p_1, \dots, p_m)$ and $Q = (q_1, \dots, q_m)$ be two 
probability distributions 
on $m$ elements, i.e., $p_i, q_i \in \R$ with 
$p_i, q_i \geq 0$, for $i=1, \dots, m$,  
and $\sum_{i=1}^{m} p_i = \sum_{i=1}^m q_i = 1$. The
\emph{Kullback-Leibler divergence} or \emph{relative entropy} of 
$P$ and $Q$ is defined as
\[
  \DKL(P \| Q) \eqdef \sum_{i=1}^m p_i \ln \frac{p_i}{q_i}.
\]
If $m = 2$, i.e., if $P = (p, 1-p)$ and $Q = (q, 1-q)$, we write
$\DKL(p \| q)$ for $\DKL((p, 1-p) \| (q, 1-q))$.
The Kullback-Leibler divergence measures the distance between the
distributions $P$ and $Q$: it represents the expected loss of efficiency 
if we encode an $m$-letter alphabet with distribution $P$ with a code 
that is optimal
for distribution $Q$. Now, the basic Chernoff bound is as follows:
\begin{theorem}\label{thm:chernoff}
Let $n \in \N$, $p \in [0,1]$, and let $X_1, \dots, X_n$ be $n$ independent 
random variables with $X_i \in \{0,1\}$
and $\Pr[X_i = 1] = p$, for $i = 1, \dots n$. 
Set $X \eqdef \sum_{i=1}^n X_i$.
  Then, for any $t \in [0, 1-p]$, we have
  \[
    \Pr[X \geq (p+t)n ] \leq e^{-\DKL(p+t \| p)n}.
  \]
\end{theorem}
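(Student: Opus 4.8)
The plan is to use the exponential moment method (the ``Bernstein trick''), which is the workhorse behind essentially all Chernoff-type bounds. First I would fix a parameter $\lambda > 0$, to be optimized at the end, and exploit that $x \mapsto e^{\lambda x}$ is monotone increasing: $\Pr[X \geq (p+t)n] = \Pr[e^{\lambda X} \geq e^{\lambda(p+t)n}]$. Applying Markov's inequality to the nonnegative random variable $e^{\lambda X}$ then gives
\[
  \Pr[X \geq (p+t)n] \leq e^{-\lambda(p+t)n}\,\EX\!\left[e^{\lambda X}\right].
\]

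Next I would evaluate the moment generating function. Since $X_1, \dots, X_n$ are independent, $\EX[e^{\lambda X}] = \prod_{i=1}^n \EX[e^{\lambda X_i}]$, and each factor equals $p e^\lambda + (1-p)$ because $X_i$ takes the value $1$ with probability $p$ and $0$ otherwise. Substituting, for every $\lambda > 0$,
\[
  \Pr[X \geq (p+t)n] \leq \left(e^{-\lambda(p+t)}\bigl(p e^\lambda + 1 - p\bigr)\right)^{n}.
\]
The remaining task is to choose $\lambda$ so as to make the base of this power as small as possible. Writing $a \eqdef p+t$ and $g(\lambda) \eqdef -\lambda a + \ln(p e^\lambda + 1 - p)$, a one-line differentiation shows that $g$ is convex and minimized where $e^\lambda = \tfrac{a(1-p)}{p(1-a)}$, a value that is genuinely larger than $1$ exactly when $a > p$, i.e.\ $t > 0$; the case $t = 0$ is trivial because the right-hand side is then $1$. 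Plugging this $\lambda$ back in, the quantity $p e^\lambda + 1 - p$ telescopes to $\tfrac{1-p}{1-a}$, and after expanding the logarithms one recognizes
\[
  g(\lambda) = -\left(a \ln\frac{a}{p} + (1-a)\ln\frac{1-a}{1-p}\right) = -\DKL(a \,\|\, p),
\]
which gives the asserted bound after raising to the $n$-th power.

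I expect the conceptual content to be light and concentrated entirely in the optimization step: everything before it is mechanical, and the only real ``aha'' is that the optimized exponent is precisely the relative entropy $\DKL(p+t \,\|\, p)$. The points that require care are therefore bookkeeping rather than ideas: one should verify that the stationary point of $g$ is indeed a minimum and lies in the admissible range $\lambda > 0$, and one should treat the boundary cases $p \in \{0,1\}$ and $t = 1-p$ separately, reading the logarithms with the usual conventions $0\ln 0 = 0$ and $\DKL(1 \,\|\, p) = \ln(1/p)$. With those caveats dispatched, the simplification of $g$ at its minimizer is the main computation and is where the shape of the exponent in the theorem comes from.
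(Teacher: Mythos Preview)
Your proposal is correct and is essentially identical to the paper's first proof (Section~\ref{sec:moment}, the moment method): Markov's inequality applied to $e^{\lambda X}$, independence to factor the moment generating function, then optimization in $\lambda$ yielding $e^\lambda = \tfrac{(1-p)(p+t)}{p(1-p-t)}$ and the Kullback--Leibler exponent. You are somewhat more careful than the paper about the boundary cases $t=0$, $p\in\{0,1\}$, and $t=1-p$, which the paper glosses over.
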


\section{Five Proofs for Theorem~\ref{thm:chernoff}}\label{sec:proofs}

We will now see five different ways of proving Theorem~\ref{thm:chernoff}.

\subsection{The Moment Method}\label{sec:moment}

The usual textbook proof of Theorem~\ref{thm:chernoff} uses the 
exponential function $\exp$ and Markov's inequality. 
It is called the \emph{moment method},
because $\exp$ simultaneously encodes all \emph{moments}
$X, X^2, X^3, \dots$ of $X$.
This trick is often attributed to Bernstein~\cite{Bernstein64}.
It is very general and 
can be used to obtain several variants of Theorem~\ref{thm:chernoff},
perhaps most prominently, the Azuma-Hoeffding inequality for
martingales with bounded differences~\cite{Hoeffding63,Azuma67}.

The proof goes as follows.
Let $\lambda > 0$ be a parameter to be determined later. We have
\[
  \Pr[X \geq (p+t)n ] = \Pr[\lambda X \geq \lambda (p+t)n ] = 
  \Pr\bigl[e^{\lambda X} \geq e^{\lambda (p+t)n} \bigr].
\]
From Markov's inequality, we obtain
\[
  \Pr\bigl[e^{\lambda X} \geq e^{\lambda (p+t)n} \bigr] \leq
    \frac{\EX[e^{\lambda X}]}{e^{\lambda (p+t)n}}.
\]
Now, the independence of the $X_i$ yields
\[
  \EX[e^{\lambda X}] = \EX\Bigl[e^{\lambda \sum_{i=1}^n X_i}\Bigr]
  =  \EX\Biggl[\prod_{i=1}^n e^{\lambda X_i}\Biggr]
  =  \prod_{i=1}^n \EX\Bigl[e^{\lambda X_i}\Bigr]
  =  \bigl(p e^\lambda + 1-p\bigr)^n.
\]
Thus, 
\begin{equation}\label{equ:lambdabound}
  \Pr[X >  (p+t)n] \leq
    \Bigl(\frac{pe^\lambda + 1-p}{e^{\lambda (p+t)}}\Bigr)^n,
\end{equation}
for every $\lambda > 0$. Optimizing for $\lambda$ using calculus,
we get that the right hand side is minimized if
\[
 e^\lambda = \frac{(1-p)(p+t)}{p(1-p-t)}.
\]
Plugging this into (\ref{equ:lambdabound}), we get
\[
  \Pr[X >  (p+t)n] \leq \Biggl[\Bigl(\frac{p}{p+t}\Bigr)^{p+t}
         \Bigl(\frac{1-p}{1-p-t}\Bigr)^{1-p-t}\Biggr]^n = 
      e^{-\DKL(p+t \| p)n},
\]
as desired.

\subsection{Chv\'atal's Method}\label{sec:chvatal}

The following proof of Theorem~\ref{thm:chernoff} is due to
Chv\'atal~\cite{Chvatal79}. As we will see below,
it can be generalized to give tail bounds for the 
\emph{hypergeometric distribution}.
Let $B(n,p)$ be the random variable that gives the number of heads in 
$n$ independent Bernoulli trials with success probability $p$.
Then,
\[
\Pr[B(n,p) = l] = \binom{n}{l} p^l (1-p)^{n-l},
\]
for $l = 0, \dots, n$. Thus, for any $\tau \geq 1$ and $k \geq pn$, we get
\begin{multline*}
\Pr[B(n,p)\ge k] = \sum_{i=k}^n \binom{n}{i}p^i (1-p)^{n-i}\\ 
\leq \sum_{i=k}^n \binom{n}{i}p^i (1-p)^{n-i} 
\underbrace{\tau^{i-k}}_{\geq 1} +
\underbrace{\sum_{i=0}^{k-1} \binom{n}{i}p^i (1-p)^{n-i} 
\tau^{i-k}}_{\geq 0} 
= \sum_{i=0}^n \binom{n}{i}p^i (1-p)^{n-i} \tau^{i-k}.
\end{multline*}
Using the Binomial theorem, we obtain
\[
\Pr[B(n,p)\ge k] \leq
\sum_{i=0}^n \binom{n}{i}p^i (1-p)^{n-i} \tau^{i-k} = 
\tau^{-k}\sum_{i=0}^n \binom{n}{i}(p\tau)^{i} (1-p)^{n-i} =
\frac{(p\tau+1-p)^n}{\tau^k}.
\]
If we write $k = (p+t)n$ and $\tau = e^\lambda$, we get
\[
\Pr[B(n,p)\ge (p+t)n] \leq
\Bigl(\frac{p e^\lambda+1-p}{e^{\lambda(p+t)}}\Bigr)^n.
\]
This is the same as (\ref{equ:lambdabound}), so we can complete the
proof of Theorem~\ref{thm:chernoff} as in Section~\ref{sec:moment}.

\subsection{The Impagliazzo-Kabanets Method}\label{sec:ik}

The third proof is due to Impagliazzo and Kabanets~\cite{ImpagliazzoKa10},
and it leads to a constructive version of the bound.
Let $\lambda \in [0,1]$ be a parameter to be chosen later. 
Let $I \subseteq \{1, \dots, n\}$ be a random index set obtained by
including each element $i \in \{1, \dots, n\}$ with probability 
$\lambda$.
We estimate 
$\EX\bigl[\prod_{i \in I} X_i\bigr]$ in two different ways, where
the expectation is over the random choice of $X_1, \dots, X_n$ and $I$.

On the one hand, using the law of total expectation and independence, we have
\begin{multline}\label{equ:ikupper}
\EX\Bigl[\prod_{i \in I} X_i\Bigr]
= \sum_{S \subseteq \{1, \dots, n\}} \Pr[I = S] \cdot  
\EX\Bigl[\prod_{i \in S} X_i\Bigr]
= \sum_{S \subseteq \{1, \dots, n\}} \Pr[I = S] \cdot  
    \prod_{i \in S} \Pr[ X_i = 1]\\
= \sum_{S \subseteq \{1, \dots, n\}} \lambda^{|S|}(1-\lambda)^{n-|S|} \cdot 
    p^{|S|}
= (\lambda p + 1 - \lambda)^n.
\end{multline}
On the other hand, 
by the law of total expectation,
\[
\EX\Bigl[\prod_{i \in I} X_i\Bigr]
 \geq
\EX\Bigl[\prod_{i \in I} X_i  \mid X \geq (p+t)n\Bigr]\Pr[X \geq (p+t)n].
\]
Now, fix $X_1, \dots, X_n$ with $X \geq (p+t)n$. For the fixed choice
of $X_1 = x_1, \dots, X_n = x_n$, the expectation 
$\EX\bigl[\prod_{i \in I} x_i\bigr]$ is exactly the probability that
$I$ avoids all the $n-X$ indices $i$ where $x_i = 0$.
Thus, the conditional expectation is
\[
\EX\Bigl[\prod_{i \in I} X_i  \mid X \geq (p+t)n\Bigr]
= 
\EX\Bigl[(1-\lambda)^{n-X} \mid X \geq (p+t)n\Bigr]
 \geq (1-\lambda)^{(1-p-t)n},
\]
so
\[
\EX\Bigl[\prod_{i \in I} X_i\Bigr]
 \geq
(1-\lambda)^{(1-p-t)n}\Pr[X \geq (p+t)n].
\]
Combining with (\ref{equ:ikupper}),
\begin{equation}\label{equ:ikbound}
\Pr[X \geq (p+t)n] \leq 
\left(\frac{\lambda p + 1 - \lambda}{(1-\lambda)^{(1-p-t)}}\right)^n.
\end{equation}
Using calculus, we get that the right hand side is minimized for
$\lambda = t/(1-p)(p+t)$ (note that $\lambda \leq 1$ for $t \leq 1-p$).
Plugging this into (\ref{equ:ikbound}),
\[
  \Pr[X >  (p+t)n] \leq
    \Biggl[\Bigl(\frac{p}{p+t}\Bigr)^{p+t}
      \Bigl(\frac{1-p}{1-p-t}\Bigr)^{1-p-t}\Biggr]^n = 
      e^{-\DKL(p+t \| p)n},
\]
as desired.

\subsection{The Encoding Argument}\label{sec:encoding}

The next proof stems from discussions with 
Luc Devroye, G\'abor Lugosi, and Pat Morin,
and it is inspired by an encoding argument~\cite{MorinMuRe17}.
A similar argument can also be derived from Xinjia Chen's
\emph{likelihood ratio method}~\cite{Chen13}.
Let $\{0,1\}^n$ be the set of all bit strings of length $n$,
and let $w: \{0, 1\}^n \rightarrow [0,1]$ be a
\emph{weight function}. We call $w$
\emph{valid} if $\sum_{x \in \{0,1\}^n} w(x) \leq 1$.
The following lemma says that for any probability distribution
$p_x$ on $\{0,1\}^n$, a valid weight function 
is unlikely to be substantially larger than $p_x$.
\begin{lemma}\label{lem:encoding}
Let $\mathcal{D}$ be a probability 
distribution on $\{0,1\}^n$ that assigns to each $x \in \{0,1\}^n$
a probability $p_x$, and let $w$ be a valid 
weight function. 
For any $s \geq 1$, we have 
\[
  \Pr_{x \sim \mathcal{D}}\left[w(x) \geq sp_x\right] 
     \leq 1/s.
\]
\end{lemma}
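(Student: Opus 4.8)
The plan is to prove Lemma~\ref{lem:encoding} by a direct counting argument, which is essentially Markov's inequality in disguise. First I would fix $s \geq 1$ and introduce the ``bad'' set $A \eqdef \{x \in \{0,1\}^n : w(x) \geq s p_x\}$, so that the quantity to be bounded is exactly
\[
  \Pr_{x \sim \mathcal{D}}\left[w(x) \geq s p_x\right] = \sum_{x \in A} p_x.
\]

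Next, for every $x \in A$ the defining inequality of $A$ rearranges to $p_x \leq w(x)/s$. Summing this over all $x \in A$ and then enlarging the range of summation gives
\[
  \sum_{x \in A} p_x \leq \frac{1}{s}\sum_{x \in A} w(x) \leq \frac{1}{s}\sum_{x \in \{0,1\}^n} w(x) \leq \frac{1}{s},
\]
where the middle step uses that $w$ is nonnegative (so dropping the terms with $x \notin A$ can only decrease the sum) and the last step uses that $w$ is valid. This is the claimed bound.

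There is no genuine obstacle here; the points that merit a moment's care are that discarding the contribution of $x \notin A$ is legitimate precisely because $w \geq 0$, and that the hypothesis $\sum_{x} w(x) \leq 1$ is exactly what is needed to finish. Note also that the inequality $p_x \leq w(x)/s$ holds trivially when $p_x = 0$, so no special case arises. One could instead phrase the whole thing as Markov's inequality applied to the random variable $w(x)/p_x$, but that forces one to fuss over points with $p_x = 0$; the direct summation above sidesteps this entirely.
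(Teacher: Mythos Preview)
Your proof is correct and essentially identical to the paper's: the paper defines the same ``bad'' set (calling it $Z_s$), writes the probability as $\sum_{x \in Z_s,\, p_x > 0} p_x$, and bounds each summand by $w(x)/s$ before using validity of $w$. The only cosmetic difference is that the paper restricts the sum to $p_x > 0$ up front while you observe that the bound $p_x \leq w(x)/s$ holds trivially when $p_x = 0$; both handle the division issue equally well.
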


\begin{proof}
Let $Z_{s} = \{ x \in \{0,1\}^n \mid w(x) \geq sp_x\}$.
We have
\[
  \Pr_{x \sim \mathcal{D}}\left[w(x) \geq s p_x\right] 
  = \sum_{\substack{x \in Z_s \\ p_x > 0}}  p_x
  \leq \sum_{\substack{x \in Z_s \\ p_x > 0}} p_x \frac{w(x)}{sp_x}
  \leq  (1/s) \sum_{x \in Z_s} w(x) 
    \leq 1/s,
\]
since $w(x) / sp_x  \geq 1$ for $x \in Z_s$, $p_x > 0$,  and since
$w$ is valid.
\end{proof}

We now show that Lemma~\ref{lem:encoding} 
implies Theorem~\ref{thm:chernoff}. 
For this, we interpret the sequence $X_1, \dots, X_n$
as a bit string of length $n$. This induces a probability distribution 
$\mathcal{D}$ that assigns to each $x \in \{0, 1\}^n$ the 
probability 
$p_x = p^{k_x} (1-p)^{n-k_x}$, where $k_x$ denotes the number of
$1$-bits in $x$.
We define a weight function $w : \{0,1\}^n \rightarrow [0,1]$ by
$w(x) = (p+t)^{k_x}(1-p-t)^{n-k_x}$, for 
$x \in \{0,1\}^n$.
Then $w$ is valid, since $w(x)$ is the
probability that $x$ is generated by setting each bit 
to $1$ independently with probability $p+t$.
For $x \in \{0,1\}^n$, 
we have
\[
\frac{w(x)}{p_x}
=
\left(\frac{p+t}{p}\right)^{k_x}
\left(\frac{1-p-t}{1-p}\right)^{n - k_x}. 
\]
Since $((p+t)/p)((1-p)/(1-p-t)) \geq 1$, it follows
that $w(x)/p_x$ is an increasing function of $k_x$.
Hence, if $k_x \geq (p + t)n$, we have
\[
\frac{w(x)}{p_x}
\geq
\left[\left(\frac{p+t}{p}\right)^{p+t}
\left(\frac{1-p-t}{1-p}\right)^{1 - p-t}\right]^n
= e^{\DKL(p+t \| p)n}.
\]
We now apply Lemma~\ref{lem:encoding} to $\mathcal{D}$ and $w$ to get
\[
\Pr[X \geq (p+t)n] = \Pr_{x\sim \mathcal{D}} [ k_x \geq (p+t)n]
\leq \Pr_{x \sim \mathcal{D}} \left[ w(x) \geq 
p_xe^{\DKL(p+t \| p)n}\right]
    \leq e^{-\DKL(p+t \| p)n},
\]
as claimed in Theorem~\ref{thm:chernoff}.

See the survey~\cite{MorinMuRe17} for a more thorough
discussion of how this proof is related to coding theory.

\subsection{A Proof via Differential Privacy}

The fifth proof of Chernoff's bound is due to
Steinke and Ullman~\cite{SteinkeUl17}, and it uses
methods from the theory of differential privacy~\cite{DworkRo14}.
Unlike the previous four proofs, it seems to lead to
a slightly weaker version of the bound.
Let $m$ be a parameter to be determined later.
The main idea is to bound the expectation of $m - 1$ independent
copies of $X$.

\begin{lemma}Let $m \in \N$ and $m \leq e^{n}$. 
Let $X^{(1)}, \dots, X^{(m - 1)}$ be $m - 1$ independent copies of 
$X$, and set $X^{(m)} = \EX[X]$.
Then, 
\[
  \EX\big[\max\{X^{(1)}, \dots, X^{(m)}\}\big] 
  \leq pn + 
  5\sqrt{n \ln m}.
\]
\label{lem:maxexp}
\end{lemma}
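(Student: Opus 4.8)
The plan is to use a basic principle from differential privacy: a mechanism that \emph{privately} selects one of several datasets cannot overfit, so the empirical sum of the selected copy is, in expectation, not much larger than that of an independent copy, which is $pn$. To set this up, view the $n(m-1)$ independent bits defining $X^{(1)},\dots,X^{(m-1)}$ as an $n\times(m-1)$ table $T=(T_{i,j})$, so that $X^{(j)}=\sum_{i=1}^n T_{i,j}$, and treat $X^{(m)}=pn$ as a deterministic $m$-th ``score.'' Fix a privacy parameter $\eps>0$ (to be chosen at the end), draw independent $Z_1,\dots,Z_m$ from the Laplace distribution with scale $1/\eps$, and let $J=\operatorname{argmax}_{j\le m}\bigl(X^{(j)}+Z_j\bigr)$ be the output of the Report-Noisy-Max mechanism.

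First I would record two properties of the map $T\mapsto J$. \emph{Privacy:} flipping a single bit of $T$ changes exactly one score $X^{(j)}$, by at most $1$, so the standard analysis of Report-Noisy-Max shows that $T\mapsto J$ is $\eps$-differentially private with respect to one-bit changes. \emph{Generalization:} I claim $\EX\bigl[X^{(J)}\bigr]\le e^{\eps}pn$. On the event $\{J=j\}$ with $j\le m-1$ we have $X^{(J)}=\sum_i T_{i,j}$, and for each fixed $i$ the privacy of $J$ under flipping bit $(i,j)$, combined with $\Pr[T_{i,j}=1]=p$, gives $\Pr[J=j\mid T_{i,j}=1]\le e^{\eps}\Pr[J=j]$; hence $\EX\bigl[T_{i,j}\,\mathbf{1}\{J=j\}\bigr]=p\Pr[J=j\mid T_{i,j}=1]\le e^{\eps}p\Pr[J=j]$. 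Summing over $i\le n$ and $j\le m-1$, and noting that on $\{J=m\}$ one has $X^{(m)}=pn\le e^{\eps}pn$, yields $\EX\bigl[X^{(J)}\bigr]\le e^{\eps}pn$. Neither step uses a tail bound, so there is no circularity with Theorem~\ref{thm:chernoff}.

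Next I would link $\EX\bigl[\max_j X^{(j)}\bigr]$ to $\EX\bigl[X^{(J)}\bigr]$ via the utility of the mechanism. Let $j^\star$ be the noise-free maximizer; since the $Z_j$ are independent of the $X^{(j)}$ we have $\EX[Z_{j^\star}]=0$, and from $X^{(J)}+Z_J\ge X^{(j^\star)}+Z_{j^\star}$ we get $X^{(J)}\ge \max_j X^{(j)}+Z_{j^\star}-Z_J$. Taking expectations and using $\EX[Z_J]\le \EX\bigl[\max_{j\le m}|Z_j|\bigr]\le (\ln m+1)/\eps$ (a one-line tail-integral estimate for Laplace variables) gives $\EX\bigl[\max_j X^{(j)}\bigr]\le e^{\eps}pn+(\ln m+1)/\eps$. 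Finally, using $e^{\eps}pn\le pn+(e-1)\eps n$ for $\eps\in[0,1]$ and choosing $\eps=\Theta\bigl(\sqrt{(\ln m)/n}\bigr)$, the hypothesis $m\le e^n$ is essentially what guarantees $\eps\le 1$, and the two error terms balance to $O(\sqrt{n\ln m})$; after disposing of the trivial cases $m\in\{1,2\}$ (where $\max_j X^{(j)}=\max\{X^{(1)},pn\}$ and a second-moment estimate suffices), careful bookkeeping keeps the constant below $5$. The main obstacle, I expect, is the generalization step: putting the ``privacy implies no overfitting'' inequality into a clean, self-contained form, and checking that the constants it costs, together with those from the Report-Noisy-Max utility bound, still fit under $5\sqrt{n\ln m}$.
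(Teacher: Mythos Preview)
Your approach is correct and follows the same Steinke--Ullman differential-privacy template as the paper, but you instantiate it with a different private selection mechanism. The paper uses the exponential mechanism (there called a \emph{stable selector} $S_\gamma$): index $i$ is chosen with probability proportional to $\gamma^{X^{(i)}}$, a deterministic accuracy bound $\max_i X^{(i)} \le \EX_{i\sim S_\gamma}[X^{(i)}] + \log_\gamma m$ follows from an entropy argument, and $\gamma^2$-stability under replacing one column is proved directly; the generalization step is then carried out by a ``column swap'' with an independent copy $\widetilde{M}$ of the data matrix, so that the selector becomes independent of the column being summed and one can pull out $\EX[\widetilde{X}^{(i)}]=pn$. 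You instead use Report-Noisy-Max with Laplace noise, which costs you the extra step of bounding $\EX[\max_j |Z_j|]$ for the utility part but lets you run a direct bit-by-bit conditioning argument for generalization. Both routes balance the two error terms by taking $\gamma = 1+\sqrt{(\ln m)/n}$, respectively $\eps = \Theta\bigl(\sqrt{(\ln m)/n}\bigr)$, and both land under the constant $5$; the paper's variant keeps the utility analysis noise-free, while yours avoids introducing the auxiliary independent copy $\widetilde{M}$.
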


We will give a proof of Lemma~\ref{lem:maxexp} below.
First, however, we will see how we can use
Lemma~\ref{lem:maxexp} to derive the following
weaker version of Theorem~\ref{thm:chernoff}.\footnote{In the published 
version of this paper, the proof of 
Theorem~\ref{thm:weakchernoff} is based
on an incorrect application of Markov's inequality.
We have changed Lemma~\ref{lem:maxexp} so that
$X^{(m)}$ is fixed to $\EX[X]$.
This ensures that Markov's inequality is applied to a nonnegative
random variable. We thank Natalia Shenkman for pointing this out to us.}

\begin{theorem}\label{thm:weakchernoff}
Let $n \in \N$, $p \in [0,1]$, and let $X_1, \dots, X_n$ be $n$ independent 
random variables with $X_i \in \{0,1\}$
and $\Pr[X_i = 1] = p$, for $i = 1, \dots n$. 
Set $X \eqdef \sum_{i=1}^n X_i$.
Then, for any $t \in [0, 1-p]$, we have
  \[
 \Pr[X \geq (p+t)n ] \leq e^{1-\frac{1}{64}t^2 n}.
  \]
\end{theorem}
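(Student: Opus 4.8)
The plan is to invoke Lemma~\ref{lem:maxexp} with a carefully calibrated number of copies $m$ and then turn the resulting bound on the expected maximum into a tail bound via Markov's inequality. Write $q \eqdef \Pr[X \geq (p+t)n]$ and $M \eqdef \max\{X^{(1)}, \dots, X^{(m)}\}$. The engine of the argument is the following amplification: since $X^{(m)} = \EX[X] = pn \leq (p+t)n$, the maximum $M$ reaches $(p+t)n$ precisely when at least one of the $m-1$ genuine copies $X^{(1)}, \dots, X^{(m-1)}$ does, so by independence $\Pr[M \geq (p+t)n] = 1 - (1-q)^{m-1}$. Hence even a modest upper bound on $\EX[M]$, run through Markov, forces $1-(1-q)^{m-1}$ away from $1$, and therefore forces $q$ to be exponentially small in $m$.

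First I would dispose of the trivial regime: if $t^2 n \leq 64$, then $e^{1 - t^2 n/64} \geq 1$ and the bound holds vacuously, so assume from now on that $t^2 n > 64$, which gives $e^{t^2 n/64} > e > 2$. Set $m \eqdef \lceil e^{t^2 n / 64} \rceil$; since $t \leq 1-p \leq 1$ we have $t^2 n/64 \leq n/64$, so $m \leq e^n$ and Lemma~\ref{lem:maxexp} gives $\EX[M] \leq pn + 5\sqrt{n \ln m}$. The variable $W \eqdef M - pn$ is nonnegative (because $M \geq X^{(m)} = pn$) and satisfies $\EX[W] \leq 5\sqrt{n\ln m}$, so Markov's inequality yields
\[
  1 - (1-q)^{m-1} \;=\; \Pr[M \geq (p+t)n] \;=\; \Pr[W \geq tn] \;\leq\; \frac{5\sqrt{n\ln m}}{tn} \;=\; \frac{5}{t}\sqrt{\frac{\ln m}{n}}.
\]
By the choice of $m$, the quantity $\ln m$ only barely exceeds $t^2 n/64$: from $e^{t^2 n/64} > e$ one gets $m \leq (1+1/e)e^{t^2 n/64}$, hence $\ln m \leq t^2 n/64 + \ln(1+1/e)$, and combined with $t^2 n > 64$ this makes the right-hand side above at most an absolute constant strictly below $1$ (the arithmetic gives roughly $0.72$). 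Consequently $(1-q)^{m-1}$ is at least a positive absolute constant (roughly $0.28$). Taking logarithms, using $1 - e^{-x} \leq x$, and using $m - 1 \geq e^{t^2 n/64} - 1 \geq \tfrac12 e^{t^2 n/64}$ (valid since $e^{t^2 n/64} > 2$), one obtains $q \leq 2c\,e^{-t^2 n/64}$ for an absolute constant $c < e/2$, so $q < e\cdot e^{-t^2 n/64} = e^{1 - \frac1{64}t^2 n}$, as claimed.

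The delicate point is the calibration of $m$. The Markov step is only informative while $\ln m$ stays below roughly $t^2 n/25$, so that the bound remains below $1$; on the other hand $m$ should be as large as possible to obtain strong exponential decay. The value $m \approx e^{t^2 n/64}$ is the sweet spot, and the factor $e^1$ in the statement is exactly the slack that absorbs the ceiling rounding in $m$ and the loss of the factor $5/8$ incurred in the Markov estimate. Beyond this balancing act, the only genuinely non-routine work is the two constant estimates — that the Markov bound is below $1$ and that the resulting constant $c$ is below $e/2$ — both of which follow from the standing assumption $t^2 n > 64$.
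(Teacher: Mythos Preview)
Your argument is correct and follows essentially the same route as the paper: amplify with $m-1$ independent copies, apply Lemma~\ref{lem:maxexp} and Markov's inequality to the nonnegative variable $M-pn$, and solve for $q$. The only cosmetic difference is that the paper chooses $m = \exp\!\big((\tfrac{e-1}{5e})^2 t^2 n\big)$ so that the Markov bound equals exactly $(e-1)/e$, which streamlines the endgame to $\alpha(m-1)\le 1$ and avoids the ceiling and numerical constant-tracking you carry out.
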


\begin{proof}%
We may assume that $t \geq 8/\sqrt{n}$, since otherwise the lemma
holds trivially.
Set $\alpha = \Pr[X \geq (p+t)n ]$.
Let $X^{(1)}, \dots, X^{(m - 1)}$ be $m - 1$ independent 
copies of $X$ and let $X^{(m)} = \EX[X]$. Then,
\begin{equation}\label{equ:maxXlb}
\Pr\big[\max\{X^{(1)}, \dots, X^{(m)}\} 
\geq (p+t)n\big] = 1 - (1-\alpha)^{m - 1} \geq 1 - e^{-\alpha (m - 1)}.
\end{equation}
On the other hand, Markov's inequality gives
\begin{multline*}
\Pr\big[\max\{X^{(1)}, \dots, X^{(m)}\} \geq (p+t)n\big] = 
\Pr\big[\max\{X^{(1)}, \dots, X^{(m)}\} -pn \geq tn\big]\\ 
\leq 
\frac{\EX\big[\max\{X^{(1)}, \dots, X^{(m)}\} - pn\big]}
{tn} 
\leq \frac{5\sqrt{\ln m}}{t\sqrt{n}},
\end{multline*}
by Lemma~\ref{lem:maxexp}.
Thus, setting $m = \exp\Big(\big(\frac{e-1}{5e}\big)^2t^2n\Big)$, and 
combining with
(\ref{equ:maxXlb}), we get
\[
\frac{e-1}{e} \geq 1 - e^{-\alpha (m - 1)} \Leftrightarrow
\alpha \leq \frac{1}{\exp\Big(\big(\frac{e-1}{5e}\big)^2t^2n\Big)
- 1}
\leq
\frac{1}{\exp\big(\frac{t^2n}{64}\big) - 1},
\]
since $\big(\frac{e-1}{5e}\big)^2 \geq \frac{1}{64}$.
Now the lemma follows from
\[
\frac{\exp\big(\frac{t^2n}{64}\big)}{\exp\big(\frac{t^2n}{64}\big) - 1}
\leq
\frac{e}{e-1}
\leq 
e,
\]
which holds as $t \geq 8/\sqrt{n}$, as $x \mapsto x/(x-1)$ is
decreasing for $x \geq 0$, and as $e \geq 2$.
\end{proof}

It remains to prove Lemma~\ref{lem:maxexp}.
For this, we use an idea from differential privacy.
Let $A \in [0, 1]^{m \times n}$, $A = (a_{ij})$, 
be an $(m \times n)$-matrix
with entries from $[0, 1]$. For a given parameter $\gamma > 1$,
we define a random variable 
$S_\gamma(A)$ with values in $\{1, \dots, m \}$ as follows: 
for $i = 1, \dots, m$,
let $b_i = \sum_{j = 1, \dots, n} a_{ij}$ be the sum of the
entries in the $i$-th row of $A$. Set 
\[
C_\gamma(A) = \sum_{i = 1}^{m} \gamma^{b_i}.
\]
Then, for $i = 1, \dots, m$, we define
\[
\Pr[S_\gamma(A) = i] = \frac{\gamma^{b_i}}{C_\gamma(A)}.
\]
The random variable $S_\gamma(A)$ is called a \emph{stable selector}
for $A$ (see the work by McSherry and Talwar~\cite{McSherryTa07} for
more background).
The next lemma states two interesting properties for
$S_\gamma(A)$.
For a matrix $A \in [0, 1]^{m \times n}$, 
a vector $\vec{c} \in [0,1]^m$,  and a number $j \in \{1, \dots, n\}$
we denote by $(A_{-j}, \vec{c})$ 
the matrix obtained from $A$ by replacing the $j$-th column of $A$ with
$\vec{c}$. 

\begin{lemma}\label{lem:stableselect}
Let $A \in [0, 1]^{m \times n}$ be an $m \times n$ matrix
with entries in $[0, 1]$. We have
\begin{itemize}
\item \textbf{Stability}:
For every vector $\vec{c} \in [0,1]^m$ and every
$i \in \{1, \dots, m\}$, 
\[
\gamma^{-2} \Pr[S_\gamma(A_{-j},\vec{c}) = i]
\leq \Pr[S_\gamma(A) = i] \leq 
\gamma^{2} \Pr[S_\gamma(A_{-j},\vec{c}) = i].
\]
\item \textbf{Accuracy}:
Let
$b_i$ be the sum of the $i$-th row of $A$. Then,

\[
\EX_{i \sim S_\gamma(A)}[b_i] \leq 
\max_{i = 1}^{m} b_i  \leq \EX_{i \sim S_\gamma(A)}[b_i] + \log_\gamma m. 
\]
\end{itemize}
\end{lemma}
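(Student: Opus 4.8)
The plan is to verify the two properties separately, each by a direct calculation from the definition of $S_\gamma(A)$. I will start with \textbf{Stability}. Fix the column index $j$ and the replacement vector $\vec c$, and compare the two matrices $A$ and $(A_{-j}, \vec c)$. If $b_i$ denotes the $i$-th row sum of $A$ and $b_i'$ the $i$-th row sum of $(A_{-j}, \vec c)$, then $b_i$ and $b_i'$ differ only in the $j$-th coordinate, and since both $a_{ij}$ and $c_i$ lie in $[0,1]$ we get $|b_i - b_i'| \leq 1$ for every $i$. Hence $\gamma^{-1} \leq \gamma^{b_i - b_i'} \leq \gamma$, which gives $\gamma^{-1}\gamma^{b_i'} \leq \gamma^{b_i} \leq \gamma\,\gamma^{b_i'}$ termwise and therefore $\gamma^{-1}C_\gamma(A_{-j},\vec c) \leq C_\gamma(A) \leq \gamma\, C_\gamma(A_{-j},\vec c)$ after summing over $i$. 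Writing out
\[
\Pr[S_\gamma(A) = i] = \frac{\gamma^{b_i}}{C_\gamma(A)}
\]
and dividing the numerator bound by the denominator bound (and vice versa) yields a factor $\gamma^2$ on one side and $\gamma^{-2}$ on the other, which is exactly the claimed inequality.

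Next I would handle \textbf{Accuracy}. The lower bound $\EX_{i \sim S_\gamma(A)}[b_i] \leq \max_i b_i$ is immediate, since an expectation of the $b_i$ under any distribution on $\{1,\dots,m\}$ is at most their maximum. For the upper bound, let $i^\star = \argmax_i b_i$. Then $\Pr[S_\gamma(A) = i^\star] = \gamma^{b_{i^\star}}/C_\gamma(A) \geq \gamma^{b_{i^\star}}/(m\gamma^{b_{i^\star}}) = 1/m$, using $C_\gamma(A) \leq m\gamma^{b_{i^\star}}$ because every term is at most $\gamma^{b_{i^\star}}$. The cleanest way to finish is to observe that
\[
\EX_{i \sim S_\gamma(A)}[b_i] = \sum_{i=1}^m \frac{\gamma^{b_i}}{C_\gamma(A)}\, b_i,
\]
and to compare this with $b_{i^\star}$. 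One natural route is to use convexity of $\gamma^x$: since $x \mapsto \gamma^x$ is convex, $\gamma^{b_{i^\star}} \geq \gamma^{\EX[b_i]} \cdot$ (a correction term) — but a more robust approach is the differential-privacy style argument bounding $b_{i^\star} - \EX[b_i]$ by $\log_\gamma m$ directly. Concretely, for any threshold $\theta$, the event $b_i \leq b_{i^\star} - \theta$ contributes to the selector a term of relative weight at most $\gamma^{-\theta}$ compared to $i^\star$, so the selector puts mass at most $m\gamma^{-\theta}$ on rows with $b_i \leq b_{i^\star}-\theta$; choosing $\theta = \log_\gamma m + \theta'$ and integrating $\theta'$ from $0$ to $\infty$ bounds the expected shortfall $\EX[b_{i^\star} - b_i]$ by $\log_\gamma m + \int_0^\infty \min\{1, m\gamma^{-\log_\gamma m - \theta'}\}\,d\theta' = \log_\gamma m + \int_0^\infty \gamma^{-\theta'}\,d\theta' = \log_\gamma m + 1/\ln\gamma$. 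This is slightly weaker than the stated $\log_\gamma m$, so to get the clean bound I would instead argue more carefully: since $b_i \le b_{i^\star}$ for all $i$, we have $b_{i^\star} - \EX[b_i] = \EX[b_{i^\star}-b_i]$, and bounding the tail sum $\sum_i \Pr[S_\gamma(A)=i](b_{i^\star}-b_i)$ by splitting at $b_{i^\star}-b_i \le \log_\gamma m$ versus $>\log_\gamma m$ and using that in the latter regime the total selector mass is $<1$ (in fact $\le \sum_i \gamma^{b_i - b_{i^\star}} \le m \gamma^{-\log_\gamma m} = 1$ is too crude, so one uses a sharper telescoping estimate).

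The main obstacle is precisely this last calculation: getting the additive error in the accuracy bound down to exactly $\log_\gamma m$ rather than $\log_\gamma m + O(1/\ln\gamma)$. The trick, which I expect the authors use, is to note that
\[
\gamma^{\max_i b_i} \leq C_\gamma(A) = \sum_{i=1}^m \gamma^{b_i},
\]
so $\max_i b_i \leq \log_\gamma C_\gamma(A)$; and simultaneously, by concavity of $\log_\gamma$ applied to the selector distribution (i.e.\ Jensen on $\log_\gamma$ of the unnormalized weights $\gamma^{b_i}$), one gets $\log_\gamma C_\gamma(A) \le \log_\gamma m + \EX_{i\sim S_\gamma(A)}[b_i]$. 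Combining these two inequalities gives $\max_i b_i \le \EX_{i\sim S_\gamma(A)}[b_i] + \log_\gamma m$ cleanly. I would present the argument in this order: first establish $\max_i b_i \le \log_\gamma C_\gamma(A)$, then the Jensen step $\log_\gamma C_\gamma(A) \le \log_\gamma m + \EX_{i \sim S_\gamma(A)}[b_i]$, and chain them. The stability part is routine; the accuracy upper bound via this two-line $\log_\gamma C_\gamma(A)$ sandwich is the one place where picking the right intermediate quantity matters.
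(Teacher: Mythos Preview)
Your final argument is correct and is essentially the paper's proof: the paper writes $b_i = \log_\gamma\bigl(C_\gamma(A)\Pr[S_\gamma(A)=i]\bigr)$, takes expectations to get $\EX_{i\sim S_\gamma(A)}[b_i] = \log_\gamma C_\gamma(A) - H_\gamma(S_\gamma(A))$, and then uses $C_\gamma(A)\geq \gamma^{\max_i b_i}$ together with the entropy bound $H_\gamma \leq \log_\gamma m$ (which is exactly your Jensen step) to conclude. The tail-integration detour you explore first is unnecessary and, as you noticed, loses an additive $1/\ln\gamma$; drop it and go straight to the $\log_\gamma C_\gamma(A)$ sandwich.
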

\begin{proof}
\textbf{Stability}:
for $k \in \{1, \dots, m\}$, let $b_k$ be the sum of the
$k$-th row of $A$, and let $\widetilde{b}_k$ be the sum of the
$k$-th row of $(A_{-j}, \widetilde{c})$. Since $A$ and 
$(A_{-j}, \widetilde{c})$ differ in one column, and since the
entries are from $[0, 1]$, we have 
$\widetilde{b}_k - 1 \leq b_k \leq \widetilde{b}_k + 1$. Hence, 
\[
\gamma^{-1} C_\gamma(A_{-j},\vec{c}) \leq C_\gamma(A) \leq
\gamma C_\gamma(A_{-j},\vec{c})
\]
and
\[
\gamma^{-2} \Pr[S_\gamma(A_{-j},\vec{c}) = i]
\leq \Pr[S_\gamma(A) = i] \leq 
\gamma^{2} \Pr[S_\gamma(A_{-j},\vec{c}) = i],
\]
as claimed.

\textbf{Accuracy}:
The inequality
$\EX_{i \sim S_\gamma(A)}[b_i] \leq \max_{i = 1}^{m} b_i$
is obvious. For the second inequality, we observe that
by definition, 
\[
  b_i = \log_\gamma (C_\gamma(A) \Pr[S_\gamma(A) = i]).
\]
Thus,
\begin{align*}
\EX_{i \sim S_\gamma(A)}[b_i]
&= \sum_{i = 1}^{m}\Pr[S_\gamma(A) = i]\log_\gamma 
(C_\gamma(A) \Pr[S_\gamma(A) = i])\\
&=
\sum_{i = 1}^{m}\Pr[S_\gamma(A) = i]\log_\gamma 
C_\gamma(A) - 
\sum_{i = 1}^{m}\Pr[S_\gamma(A) = i]\log_\gamma 
\frac{1}{\Pr[S_\gamma(A) = i]}\\
&\geq
\sum_{i = 1}^{m}\Pr[S_\gamma(A) = i]\log_\gamma 
\gamma^{\max_{i = 1}^m b_i} - \log_\gamma m,\\
&=
\max_{i = 1}^m b_i- \log_\gamma m,
\end{align*}
since 
$C_\gamma(A) = \sum_{i = 1}^{m} \gamma^{b_i} \geq 
\gamma^{\max_{i = 1}^m b_i}$
and since $x \mapsto - \log_\gamma(x)$ is a convex function.
\end{proof}

Lemma~\ref{lem:stableselect} shows that $S_\gamma(A)$ 
constitutes a reasonable mechanism of estimating the maximum row sum
of $A$ without revealing too much information about any single
column of $A$. We can now use
Lemma~\ref{lem:stableselect} to bound the expectation of
the maximum of $m - 1$ independent copies of $X$ and 
$\EX[X]$.
\begin{lemma}\label{lem:maxgamma}
Let $m \in \N$. 
let $X^{(1)}, \dots, X^{(m - 1)}$ be $m - 1$ independent copies of 
$X$, and set $X^{(m)} = \EX[X]$.
Then, for any $\gamma > 1$, we have
\[
  \EX\big[\max\{X^{(1)}, \dots, X^{(m)}\}\big] \leq 
  \gamma^2pn + \log_\gamma m.
\]
\end{lemma}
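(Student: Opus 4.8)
The plan is to encode $X^{(1)},\dots,X^{(m)}$ as the row sums of a random matrix and apply Lemma~\ref{lem:stableselect}. Form $A \in [0,1]^{m\times n}$ by letting its $i$-th row, for $i = 1,\dots,m-1$, be the bit string $(X^{(i)}_1,\dots,X^{(i)}_n)$ that realizes the $i$-th copy of $X$, and letting its $m$-th row be the constant vector $(p,\dots,p)$. Then every entry lies in $[0,1]$, the $i$-th row sum $b_i$ equals $X^{(i)}$ for $i \le m-1$, and $b_m = pn = \EX[X] = X^{(m)}$, so $\max_{i} b_i = \max\{X^{(1)},\dots,X^{(m)}\}$. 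The accuracy part of Lemma~\ref{lem:stableselect} therefore gives, for every realization of $A$,
\[
  \max\{X^{(1)},\dots,X^{(m)}\} \le \EX_{i\sim S_\gamma(A)}[b_i] + \log_\gamma m,
\]
and taking expectations over $A$ reduces the claim to showing $\EX_A\big[\EX_{i\sim S_\gamma(A)}[b_i]\big] \le \gamma^2 pn$.

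Since $b_i = \sum_{j=1}^{n} a_{ij}$, by linearity of expectation it is enough to prove, for each fixed $j \in \{1,\dots,n\}$, that $\EX_A\big[\EX_{i\sim S_\gamma(A)}[a_{ij}]\big] \le \gamma^2 p$. The key step is to decouple the choice of row from the $j$-th column. Let $A'$ be the matrix obtained from $A$ by replacing its $j$-th column with the all-zeros vector, i.e.\ $A' = (A_{-j}, \vec{0})$; then $A'$ is a function of the columns of $A$ other than the $j$-th, and so is independent of the random entries $a_{1j}, \dots, a_{mj}$. Condition on all columns of $A$ except the $j$-th. By the stability part of Lemma~\ref{lem:stableselect}, $\Pr[S_\gamma(A) = i] \le \gamma^2 \Pr[S_\gamma(A') = i]$ for every $i$, and since $a_{ij} \ge 0$,
\[
  \EX_{i\sim S_\gamma(A)}[a_{ij}] = \sum_{i=1}^{m} \Pr[S_\gamma(A) = i]\, a_{ij}
  \le \gamma^2 \sum_{i=1}^{m} \Pr[S_\gamma(A') = i]\, a_{ij}.
\]
Taking the expectation over the $j$-th column (still conditioning on the rest) and using that $\Pr[S_\gamma(A') = i]$ does not depend on it while $\EX[a_{ij}] = p$ for every $i$ — including $i = m$, where $a_{mj} = p$ is deterministic — the right-hand side has conditional expectation $\gamma^2 p \sum_{i=1}^m \Pr[S_\gamma(A') = i] = \gamma^2 p$. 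Averaging over the remaining columns preserves this bound, and summing over $j$ yields $\EX_A\big[\EX_{i\sim S_\gamma(A)}[b_i]\big] \le \gamma^2 pn$, as needed.

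The main obstacle is this decoupling: one has to introduce a comparison selector $S_\gamma(A')$ built from a column that is independent of the randomness in column $j$, so that after invoking stability the expectation over column $j$ passes through the selection probabilities and only sees the unconditional mean $p$, rather than the selection-biased conditional mean of $a_{ij}$. The remaining ingredients — the matrix encoding and the appeal to accuracy — are routine.
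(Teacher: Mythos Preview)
Your proof is correct and follows the same strategy as the paper: encode the $X^{(i)}$ as row sums of a matrix, apply accuracy to reduce to bounding $\EX_A\EX_{i\sim S_\gamma(A)}[b_i]$, and then decouple the selector from each column via stability so that the column entries contribute only their unconditional mean $p$. The only difference is in the choice of comparison column for the decoupling step: the paper replaces column $j$ by the $j$-th column of an independent copy $\widetilde{M}$ and argues via equality in distribution, whereas you replace it by the all-zeros vector and argue directly by conditional independence. Your variant is slightly leaner, since it avoids introducing $\widetilde{M}$ and the distributional swap, but the underlying idea is identical.
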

\begin{proof}
Let $X_1^{(1)}, \dots, X_1^{(m - 1)}$ be $m - 1$ independent
copies of $X_1$, and let $X_1^{(m)} = \EX[X_1]$; let
$X_2^{(1)}, \dots, X_2^{(m - 1)}$ be $m - 1$ independent copies
of $X_2$ and let $X_2^{(m)} = \EX[X_2]$; and so on.
We consider the random $m \times n$ matrix $M \in \{0,1\}^{m \times n}$ 
whose entry in row $i$ and column $j$ is $X_j^{(i)}$.
Then, we can write $X^{(i)} = \sum_{j = 1}^n X_j^{(i)}$, for
$i = 1, \dots, m$. By the accuracy claim in Lemma~\ref{lem:stableselect},
\begin{equation}\label{equ:accumatrix}
\EX_{M} \big[\max\{X^{(1)}, \dots, X^{(m)}\}\big]
\leq \EX_{M, i \sim S_\gamma(M)}\big[X^{(i)}\big] + \log_\gamma m
\end{equation}
Now we bound $\EX_{M, i \sim S_\gamma(M)}\big[X^{(i)}\big]$. 
We unwrap the expectation for $i \sim S_\gamma(M)$
and get
\[
\EX_{M, i \sim S_\gamma(M)}[X^{(i)}]
=
\EX_{M}
\Big[\sum_{i = 1}^{m} \Pr[S_\gamma(M) = i] X^{(i)}\Big]
\]
Let $\widetilde{M}$ be an independent copy of $M$.
Denote the entry in the $i$-th row and $j$-th column
of $\widetilde{M}$ by $\widetilde{X}_j^{(i)}$, and
set $\widetilde{X}^{(i)} = \sum_{j = 1}^{n} \widetilde{X}_j^{(i)}$,
for $i = 1, \dots, m$. 
By the stability claim in Lemma~\ref{lem:stableselect}, for
every $j \in \{1, \dots, n\}$,
\begin{align*}
\EX_{M}
\Big[\sum_{i = 1}^{m} \Pr\big[S_\gamma(M) = i\big] X^{(i)}\Big]
&\leq
\gamma^2
\EX_{M,\widetilde{M}}
\Big[\sum_{i = 1}^{m} 
\Pr\big[S_\gamma(M_{-j},\widetilde{M}_j) = i\big] X^{(i)}\Big].\\
\intertext{Since the random variables $X_j^{(i)}$, $\widetilde{X}_j^{(i)}$,
$1 \leq i \leq m$, $1 \leq j \leq n$,
are independent, the pairs $\big((M_{-j}, \widetilde{M}_j), X_j^{(i)}\big)$
and $\big(M, \widetilde{X}_j^{(i)}\big)$ have the same distribution. 
Therefore, we can write}
\EX_{M}
\Big[\sum_{i = 1}^{m} 
\Pr\big[S_\gamma(M) = i\big] X^{(i)}\Big]&=
\EX_{M}
\Big[\sum_{i = 1}^{m} 
\sum_{j = 1}^{n}
\Pr\big[S_\gamma(M) = i\big] X_j^{(i)}\Big]\\
&\leq \gamma^2
\EX_{M,\widetilde{M}}
\Big[\sum_{j = 1}^{n} \sum_{i = 1}^{m}
\Pr\big[S_\gamma(M_{-j},\widetilde{M}_j) = i\big] X_j^{(i)}\Big]\\ &=
\gamma^2
\EX_{M,\widetilde{M}}
\Big[\sum_{j = 1}^{n} \sum_{i = 1}^{m}
\Pr\big[S_\gamma(M) = i\big] \widetilde{X}_j^{(i)}\Big]\\ &=
\gamma^2
\EX_{M}
\Big[\sum_{i = 1}^{m} \Pr\big[S_\gamma(M) = i\big] 
\EX_{\widetilde{M}}\big[\widetilde{X}^{(i)}\big]\Big]\\
&=
\gamma^2
\EX_{M}
\Big[\sum_{i = 1}^{m} \Pr\big[S_\gamma(M) = i\big] 
pn\Big] = \gamma^2 pn.
\end{align*}
We can conclude the lemma by plugging this bound into
(\ref{equ:accumatrix}).
\end{proof}

To obtain Lemma~\ref{lem:maxexp}, we set
$\gamma = 1+\frac{\sqrt{\ln m}}{\sqrt{n}}$. 
Now, Lemma~\ref{lem:maxgamma} gives
\begin{align*}
  \EX\big[\max\{X^{(1)},  \dots, X^{(m)}\}\big] &\leq 
  \left(1+\frac{\sqrt{\ln m}}{\sqrt{n}}\right)^2pn + 
  \frac{\ln m}{\ln\left(1+\frac{\sqrt{\ln m}}{\sqrt{n}}\right)}\\
&\leq 
  \left(1+\frac{3\sqrt{\ln m}}{\sqrt{n}}\right)pn + 
  \frac{\ln m}{\frac{\sqrt{\ln m}}{2 \sqrt{n}}},\\
\intertext{since $\frac{\sqrt{\ln m}}{\sqrt{n}} \leq 1$ by our assumption
$m \leq e^n$ and 
$\ln (1+x) \geq x/2$, for $x \in [0,1]$. Hence, using $pn \leq n$,}
  \EX\big[\max\{X^{(1)}, \dots, X^{(m)}\}\big] &\leq 
pn + 5\sqrt{n\ln m},
\end{align*}
as desired.

\section{Useful Consequences}\label{sec:conseq}

We now show several useful consequences of 
Theorem~\ref{thm:chernoff}. These results can be derived
directly from Theorem~\ref{thm:chernoff}, and therefore they
also hold for variants of the theorem with slightly different 
assumptions.

\subsection{The Lower Tail}

First, we show that an analogous bound holds for the
lower tail probability $\Pr[ X \leq (p-t) n]$.

\begin{corollary}\label{cor:chernoff_lower}
Let $X_1, \dots, X_n$ be independent random variables with 
$X_i \in \{0,1\}$ and $\Pr[X_i = 1] = p$, for $i = 1, \dots n$. 
Set $X \eqdef \sum_{i=1}^n X_i$. Then, for any $t \in [0, p]$, we have
\[
  \Pr[X \leq (p-t)n ] \leq e^{-\DKL(p-t \| p)n}.
\]
\end{corollary}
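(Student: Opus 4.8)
The plan is to reduce Corollary~\ref{cor:chernoff_lower} directly to Theorem~\ref{thm:chernoff} by a symmetry argument. First I would introduce the complementary indicator variables $Y_i \eqdef 1 - X_i$, which are again independent, take values in $\{0,1\}$, and satisfy $\Pr[Y_i = 1] = 1-p =: q$. Setting $Y \eqdef \sum_{i=1}^n Y_i = n - X$, the event $\{X \leq (p-t)n\}$ is identical to the event $\{Y \geq (q+t)n\}$. Since $t \in [0,p]$ translates to $t \in [0, 1-q]$, Theorem~\ref{thm:chernoff} applies to $Y$ with success probability $q$ and deviation $t$, giving
\[
  \Pr[X \leq (p-t)n] = \Pr[Y \geq (q+t)n] \leq e^{-\DKL(q+t \,\|\, q)\,n}.
\]

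The remaining step is a purely calculational identity: $\DKL(q+t \,\|\, q) = \DKL(p-t \,\|\, p)$. Writing out the definition, $\DKL(q+t \,\|\, q) = (q+t)\ln\frac{q+t}{q} + (1-q-t)\ln\frac{1-q-t}{1-q}$, and substituting $q = 1-p$ gives $(1-p+t)\ln\frac{1-p+t}{1-p} + (p-t)\ln\frac{p-t}{p}$, which is exactly $\DKL(p-t \,\|\, p)$ since the two summands in $\DKL(p-t\,\|\,p) = (p-t)\ln\frac{p-t}{p} + (1-p+t)\ln\frac{1-p+t}{1-p}$ match termwise. This reflects the general fact that $\DKL$ on two-point distributions is invariant under simultaneously swapping both coordinates of $P$ and $Q$.

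I do not anticipate a genuine obstacle here; the only point requiring a little care is the boundary behavior when $p = 0$ (forcing $t = 0$, so the bound is the trivial $\Pr[X \leq 0] \leq 1$) or when $t = p$ (so $\DKL(0 \,\|\, p)$ involves a $0 \ln 0$ term, interpreted as $0$, giving the bound $e^{-n\ln\frac{1}{1-p}} = (1-p)^n = \Pr[X = 0]$, which holds with equality). These degenerate cases are handled by the usual conventions for $0 \ln 0$ already implicit in the statement of Theorem~\ref{thm:chernoff}, so no separate treatment is needed beyond a one-line remark if desired.
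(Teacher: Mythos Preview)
Your proposal is correct and follows essentially the same approach as the paper: both arguments pass to the complementary indicators $1-X_i$, rewrite the lower-tail event for $X$ as an upper-tail event for their sum, apply Theorem~\ref{thm:chernoff}, and conclude via the symmetry $\DKL(1-p+t \,\|\, 1-p) = \DKL(p-t \,\|\, p)$. You simply supply more detail (the explicit verification of the $\DKL$ identity and the boundary-case remarks) than the paper does.
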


\begin{proof}
\begin{align*}
  \Pr[X \leq (p-t)n ] = \Pr[n-X \geq n-(p-t)n ] = \Pr[X' \geq (1-p+t)n ],
\end{align*}
where $X' = \sum_{i=1}^{n} X_i'$ with independent random variables
$X_i' \in \{0,1\}$ such that $\Pr[X_i' = 1]  = 1-p$. The 
result follows from $\DKL(1-p+t \| 1-p) = \DKL(p-t \| p)$.
\end{proof}

\subsection{Multiplicative Version}

Next, we derive a multiplicative variant of 
Theorem~\ref{thm:chernoff}. This well-known version of
the bound can be found in the classic text by Motwani and
Raghavan~\cite{MotwaniRa95}.

\begin{corollary}\label{cor:MR}
Let $X_1, \dots, X_n$ be independent random variables with 
$X_i \in \{0,1\}$ and $\Pr[X_i = 1] = p$, for $i = 1, \dots n$. Set 
$X \eqdef \sum_{i=1}^n X_i$ and
$\mu = pn$.  Then, for any $\delta \geq 0$, we have
\begin{align*}
  \Pr[X \geq (1+\delta)\mu ] &\leq 
    \left(\frac{e^{\delta}}{(1+\delta)^{1+\delta}}\right)^\mu,
  \text{ and}\\
  \Pr[X \leq (1-\delta)\mu ] &\leq 
   \left(\frac{e^{-\delta}}{(1-\delta)^{1-\delta}}\right)^\mu.
\end{align*}
\end{corollary}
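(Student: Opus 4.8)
The plan is to derive Corollary~\ref{cor:MR} directly from Theorem~\ref{thm:chernoff} and Corollary~\ref{cor:chernoff_lower} by a change of variables, together with a small case analysis to handle parameter ranges where the original bounds do not literally apply. For the upper tail, I would write $(1+\delta)\mu = (p+t)n$, so that $t = \delta p$. As long as $t \in [0, 1-p]$, i.e. $\delta p \leq 1-p$, Theorem~\ref{thm:chernoff} gives $\Pr[X \geq (1+\delta)\mu] \leq e^{-\DKL(p+t\|p)n}$, and it remains only to check that
\[
  e^{-\DKL(p+\delta p\,\|\,p)n} = \left(\Bigl(\tfrac{p}{p+t}\Bigr)^{p+t}\Bigl(\tfrac{1-p}{1-p-t}\Bigr)^{1-p-t}\right)^n \leq \left(\frac{e^\delta}{(1+\delta)^{1+\delta}}\right)^\mu.
\]
The first factor already contributes $(1/(1+\delta))^{(1+\delta)\mu}$ exactly, since $(p/(p+t))^{p+t} = (1/(1+\delta))^{(1+\delta)p}$ raised to the $n$. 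So the whole statement reduces to the inequality $\bigl((1-p)/(1-p-\delta p)\bigr)^{(1-p-\delta p)n} \leq e^{\delta p n} = e^{\delta\mu}$, i.e. $(1-p-\delta p)\ln\!\bigl(1 - \tfrac{\delta p}{1-p}\bigr) \geq -\delta p$ after dividing by $n$; this is the elementary bound $(1-y)\ln(1-y/(1-p)) \geq \ldots$, cleanest to verify via $\ln(1+x) \leq x$ applied to $x = \delta p/(1-p-\delta p)$. The lower tail is symmetric: using Corollary~\ref{cor:chernoff_lower} with $t = \delta p$, which always satisfies $t \in [0,p]$, the factor $((1-p)/(1-p+\delta p))^{\cdots}$ is again the ``$e^{-\delta}$'' part of the product while the other factor is controlled by $\ln(1+x)\le x$.

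The main obstacle is the edge of the parameter range. Theorem~\ref{thm:chernoff} only covers $t \leq 1-p$, i.e. $\delta \leq (1-p)/p$, whereas Corollary~\ref{cor:MR} claims the upper-tail bound for all $\delta \geq 0$. For $\delta > (1-p)/p$ we have $(1+\delta)\mu > n$, so $\Pr[X \geq (1+\delta)\mu] = 0$ and the bound holds trivially; I would dispatch this first. One should also dispose of the degenerate cases $p = 0$ (where $\mu = 0$ and $X = 0$ almost surely) and $p = 1$ separately, since the expressions $p/(p+t)$ etc. involve division, and of $\delta = 0$ (trivial). For the lower tail, when $\delta > 1$ the event $X \leq (1-\delta)\mu$ has negative threshold and hence probability $0$ unless $\delta = 1$ and $p>0$, in which case $\Pr[X \leq 0] = \Pr[X=0] = (1-p)^n = e^{-p n} = (e^{-1}/0^0)^\mu$ with the usual convention $0^0 = 1$; again trivial to check directly.

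So the steps, in order, are: (1) handle $p \in \{0,1\}$ and $\delta = 0$ as trivial cases; (2) for the upper tail with $0 < \delta \le (1-p)/p$, set $t = \delta p$, invoke Theorem~\ref{thm:chernoff}, rewrite $e^{-\DKL(p+t\|p)n}$ in product form, and reduce to the one-line inequality $(1-p-\delta p)\ln\!\bigl(1 - \tfrac{\delta p}{1-p}\bigr) \geq -\delta p$, proved by $\ln(1+x)\le x$; (3) for $\delta > (1-p)/p$, observe $\Pr[X \ge (1+\delta)\mu] = 0$; (4) mirror (2) for the lower tail with $t = \delta p \in [0,p]$ using Corollary~\ref{cor:chernoff_lower}, and (5) handle $\delta \ge 1$ in the lower tail as a trivial/boundary case. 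I expect step (2)'s elementary inequality — checking that the leftover factor is bounded by $e^{\pm\delta\mu}$ — to be the only place requiring actual calculation, and even that is a standard application of $\ln(1+x)\le x$; the real care is just in bookkeeping the parameter ranges so that ``$\delta \ge 0$'' is genuinely covered.
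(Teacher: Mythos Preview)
Your approach—substitute $t=\delta p$ into Theorem~\ref{thm:chernoff} (resp.\ Corollary~\ref{cor:chernoff_lower}), split off the $(1+\delta)^{\mp(1+\delta)}$ factor exactly, and bound the leftover factor by $e^{\pm\delta}$ via $\ln(1+x)\le x$—is precisely the paper's strategy. Your edge-case bookkeeping is in fact more careful than the paper's (which tacitly stays in the range $t\in[0,1-p]$), and your direct passage to $e^{\delta}$ even sidesteps a glitch in the paper's displayed intermediate inequality $(1-\delta p/(1-p))^{\delta-(1-p)/p}\le e^{-\delta^2 p/(1-p)+\delta}$, which fails for instance at $p=\delta=1/2$ although the final bound $\le e^{\delta}$ is correct and follows exactly by your argument.
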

\begin{proof}
Setting $t = \delta\mu/n$ in Theorem~\ref{thm:chernoff} yields
\begin{align*}
\Pr[X \geq (1+\delta)\mu] &\leq
\exp\left(-n\left[p(1+\delta)\ln(1+\delta) + 
p\left(\frac{1-p}{p}-\delta \right)\ln\left(1-\delta 
   \frac{p}{1-p}\right)\right]\right)\\
&= \left(\frac{(1-\delta p/(1-p))^{\delta - (1-p)/p}}{(1+\delta)^{1+\delta}}\right)^\mu\\
&\leq \left(\frac{e^{-\delta^2p/(1-p) + \delta}}{(1+\delta)^{1+\delta}}\right)^\mu
\leq \left(\frac{e^{\delta}}{(1+\delta)^{1+\delta}}\right)^\mu.
\end{align*}
Setting $t = \delta\mu/n$ in Corollary~\ref{cor:chernoff_lower} yields
\begin{align*}
\Pr[X \leq (1-\delta)\mu] &\leq
\exp\left(-n\left[p(1-\delta)\ln(1-\delta) + 
p\left(\frac{1-p}{p}+\delta \right)
\ln\left(1+\delta \frac{p}{1-p}\right)\right]\right)\\
&= \left(\frac{(1+\delta p/(1-p))^{-\delta - (1-p)/p}}
{(1-\delta)^{1-\delta}}\right)^\mu\\
&\leq \left(\frac{e^{-\delta^2p/(1-p) - \delta}}
{(1-\delta)^{1-\delta}}\right)^\mu
\leq \left(\frac{e^{-\delta}}{(1-\delta)^{1-\delta}}\right)^\mu.
\end{align*}
\end{proof}

\subsection{Useful Variants}

The next few corollaries give some handy variants of the bound
that are often more manageable in practice. First,
we give a simple bound for the multiplicative lower tail.

\begin{corollary}\label{cor:handy_lower}
  Let $X_1, \dots, X_n$ be independent random variables with 
  $X_i \in \{0,1\}$
  and $\Pr[X_i = 1] = p$, for $i = 1, \dots n$. Set 
  $X \eqdef \sum_{i=1}^n X_i$ and $\mu = pn$.
  Then, for any $\delta \in (0, 1)$, we have
  \[
    \Pr[X \leq (1-\delta)\mu ] \leq e^{-\delta^2\mu/2}.
  \]
\end{corollary}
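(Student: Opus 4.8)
The plan is to derive Corollary~\ref{cor:handy_lower} directly from the multiplicative lower tail bound in Corollary~\ref{cor:MR}, namely $\Pr[X \leq (1-\delta)\mu] \leq \bigl(e^{-\delta}/(1-\delta)^{1-\delta}\bigr)^\mu$. After taking logarithms, it suffices to show that for all $\delta \in (0,1)$,
\[
  -\delta - (1-\delta)\ln(1-\delta) \leq -\delta^2/2,
\]
or equivalently $f(\delta) \eqdef (1-\delta)\ln(1-\delta) - \delta + \delta^2/2 \geq 0$ on $(0,1)$.

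First I would check the boundary behavior: $f(0) = 0$, and $f$ extends continuously to $\delta = 1$ with $f(1) = 1/2 > 0$ (using $\lim_{x\to 0^+} x\ln x = 0$). Then I would differentiate: $f'(\delta) = -\ln(1-\delta) - 1 + (-1) + \delta$... let me be careful: $\frac{d}{d\delta}\bigl[(1-\delta)\ln(1-\delta)\bigr] = -\ln(1-\delta) - 1$, so $f'(\delta) = -\ln(1-\delta) - 1 - 1 + \delta = -\ln(1-\delta) + \delta - 2$? That is not right either — I would recompute, but the intended route is: $f'(0) = 0$ and $f'(\delta) \geq 0$ for $\delta \in (0,1)$, which follows by looking at $f''(\delta) = \frac{1}{1-\delta} - 1 = \frac{\delta}{1-\delta} \geq 0$. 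Since $f''\geq 0$ on $(0,1)$, $f'$ is nondecreasing; combined with $f'(0)=0$ this gives $f'\geq 0$ on $(0,1)$, hence $f$ is nondecreasing, hence $f(\delta)\geq f(0)=0$. This is the standard elementary-calculus verification of the inequality $(1-\delta)\ln(1-\delta) \geq -\delta + \delta^2/2$, which can alternatively be seen from the Taylor series $\ln(1-\delta) = -\sum_{k\geq 1}\delta^k/k$, giving $(1-\delta)\ln(1-\delta) = -\delta + \sum_{k\geq 2}\bigl(\tfrac{1}{k-1}-\tfrac{1}{k}\bigr)\delta^k = -\delta + \tfrac{\delta^2}{2} + \tfrac{\delta^3}{6} + \cdots \geq -\delta + \tfrac{\delta^2}{2}$ since every remaining coefficient is nonnegative.

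Given this inequality, the corollary follows immediately: $\Pr[X \leq (1-\delta)\mu] \leq \exp\bigl(\mu[-\delta - (1-\delta)\ln(1-\delta)]\bigr) \leq \exp(-\delta^2\mu/2)$. The main obstacle, such as it is, is purely the elementary inequality $f(\delta)\geq 0$; the probabilistic content is entirely inherited from Corollary~\ref{cor:MR}. The series argument is the cleanest, since it sidesteps any sign-checking in the derivative computations and makes transparent why exactly the constant $1/2$ appears (it is the leading Taylor coefficient, and dropping the positive higher-order terms costs nothing).
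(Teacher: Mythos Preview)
Your approach is exactly the paper's: start from Corollary~\ref{cor:MR} and bound $(1-\delta)\ln(1-\delta)$ below by $-\delta+\delta^2/2$ via the power series $\ln(1-\delta)=-\sum_{i\geq 1}\delta^i/i$, which is precisely your series argument. One small slip: your $f$ should be $(1-\delta)\ln(1-\delta)+\delta-\delta^2/2$ (you flipped the signs on the last two terms), which is why your derivative computation went awry and why $f(1)=1/2$; with the corrected $f$ you indeed get $f'(0)=0$ and $f''(\delta)=\delta/(1-\delta)\geq 0$.
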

\begin{proof}
By Corollary~\ref{cor:MR}
\[    
  \Pr[X \leq (1-\delta)\mu ] \leq 
  \left(\frac{e^{-\delta}}{(1-\delta)^{1-\delta}}\right)^\mu. 
\]
Using the power series expansion of $\ln(1-\delta)$, we get
\[
  (1-\delta)\ln(1-\delta) = -(1 - \delta) 
  \sum_{i=1}^\infty \frac{\delta^i}{i}
  = -\delta + \sum_{i=2}^\infty \frac{\delta^i}{(i-1)i}
  \geq -\delta + \delta^2/2.
\]
Thus,
\[    
  \Pr[X \leq (1-\delta)\mu ] \leq e^{[-\delta + \delta-\delta^2/2]\mu} = 
    e^{-\delta^2\mu/2},
\]
as claimed.
\end{proof}

An only slightly more complicated bound can be found for the
multiplicative upper tail.

\begin{corollary}\label{cor:handy_upper}
Let $X_1, \dots, X_n$ be independent random variables with 
$X_i \in \{0,1\}$ and $\Pr[X_i = 1] = p$, for $i = 1, \dots n$. 
Set $X \eqdef \sum_{i=1}^n X_i$ and $\mu = pn$.  Then, 
for any $\delta \geq 0$, we have
  \[
    \Pr[X \geq (1+\delta)\mu ] \leq e^{-\min\{\delta^2,\delta\}\mu/4}.
  \]
\end{corollary}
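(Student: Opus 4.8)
The plan is to derive Corollary~\ref{cor:handy_upper} from the multiplicative bound in Corollary~\ref{cor:MR}, namely from $\Pr[X \geq (1+\delta)\mu] \leq \left(e^\delta/(1+\delta)^{1+\delta}\right)^\mu$. Writing the right-hand side as $e^{-g(\delta)\mu}$ with $g(\delta) = (1+\delta)\ln(1+\delta) - \delta$, the whole task reduces to the analytic estimate $g(\delta) \geq \min\{\delta^2,\delta\}/4$ for all $\delta \geq 0$. I would split into the two regimes $0 \leq \delta \leq 1$ and $\delta > 1$, since the $\min$ switches branches there, and handle each separately.

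For the range $0 \leq \delta \leq 1$ I want to show $g(\delta) \geq \delta^2/4$. The clean way is a power-series argument mirroring the one in the proof of Corollary~\ref{cor:handy_lower}: expanding $\ln(1+\delta) = \sum_{i\geq 1} (-1)^{i+1}\delta^i/i$ and multiplying by $(1+\delta)$, the $\delta$ terms cancel against the $-\delta$, leaving $g(\delta) = \sum_{i\geq 2} (-1)^i \delta^i/((i-1)i) = \delta^2/2 - \delta^3/6 + \delta^4/12 - \cdots$. Pairing consecutive terms shows the tail from $\delta^3$ on is bounded in absolute value by something like $\delta^3/6$, so $g(\delta) \geq \delta^2/2 - \delta^3/6 \geq \delta^2/2 - \delta^2/6 = \delta^2/3 \geq \delta^2/4$ on $[0,1]$. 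Alternatively one can just check $g(\delta) - \delta^2/4$ is nonnegative via its derivatives at $0$, but the series route is self-contained and matches the paper's style.

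For $\delta > 1$ I want $g(\delta) \geq \delta/4$, i.e. $(1+\delta)\ln(1+\delta) \geq \delta + \delta/4 = 5\delta/4$. Since $(1+\delta)\ln(1+\delta) \geq \delta\ln(1+\delta) \geq \delta\ln 2$ and $\ln 2 \approx 0.693 < 5/4$, the crude bound is not quite enough, so I would instead note that for $\delta \geq 1$ we have $\ln(1+\delta) \geq \ln 2$, hence $(1+\delta)\ln(1+\delta) - \delta \geq (1+\delta)\ln 2 - \delta = \ln 2 + \delta(\ln 2 - 1)$, which is actually decreasing — so that bound is the wrong direction. The right move is: $h(\delta) := g(\delta) - \delta/4$ satisfies $h'(\delta) = \ln(1+\delta) - 1/4 > 0$ for $\delta > 1$ (indeed $\ln 2 > 1/4$), so $h$ is increasing on $[1,\infty)$, and it suffices to check $h(1) = 2\ln 2 - 1 - 1/4 = 2\ln 2 - 5/4 \geq 0$, which holds since $2\ln 2 \approx 1.386 \geq 1.25$. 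Combining the two regimes gives $g(\delta) \geq \min\{\delta^2,\delta\}/4$ for all $\delta \geq 0$, and substituting back yields $\Pr[X \geq (1+\delta)\mu] \leq e^{-\min\{\delta^2,\delta\}\mu/4}$, as claimed.

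The main obstacle is purely the case $\delta > 1$: the naive one-line estimates (using $\ln(1+\delta)\geq\ln 2$ directly on the product, or monotonicity of $g/\delta$) push the constant the wrong way, so one has to be a little careful to reduce to the single boundary check $h(1)\geq 0$ via the sign of $h'$. Everything else — the cancellation in the power series for small $\delta$, the final exponent bookkeeping — is routine.
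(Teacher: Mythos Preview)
Your argument is correct, but it takes a different route from the paper's. You start from Corollary~\ref{cor:MR} and reduce everything to the one-variable inequality $g(\delta) = (1+\delta)\ln(1+\delta) - \delta \geq \min\{\delta^2,\delta\}/4$, which you then handle by a power-series estimate on $[0,1]$ (mirroring the proof of Corollary~\ref{cor:handy_lower}) and a monotonicity-plus-boundary-check on $[1,\infty)$. The paper instead goes back to Theorem~\ref{thm:chernoff} directly and works with $f(\delta) = \DKL\big((1+\delta)p \,\|\, p\big)$: computing $f''(\delta) \geq p/(1+\delta)$ and applying Taylor's theorem with Lagrange remainder (using $f(0) = f'(0) = 0$) yields the single bound $f(\delta) \geq \delta^2 p / \big(2(1+\delta)\big)$, after which both regimes fall out immediately from $1/(1+\delta) \geq 1/2$ for $\delta < 1$ and $\delta/(1+\delta) \geq 1/2$ for $\delta \geq 1$. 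The paper's approach is more uniform --- one second-derivative bound covers both cases with no separate boundary evaluation --- while your approach is a bit more hands-on but stays closer in spirit to the power-series proof of Corollary~\ref{cor:handy_lower} and avoids invoking the full KL expression again. Both are perfectly valid; the main practical difference is that the Taylor route sidesteps exactly the ``wrong-direction'' false starts you encountered for $\delta > 1$.
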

\begin{proof}
We may assume that $(1+\delta)p \leq 1$. Then, Theorem~\ref{thm:chernoff} 
gives
  \[
    \Pr[X \geq (1+\delta)pn ] \leq e^{-\DKL((1+\delta)p \| p)n}.
  \]
Define $f(\delta) \eqdef \DKL((1+\delta)p \| p)$.
Then,
\[
  f'(\delta) = p \ln(1+\delta) -  p \ln (1-\delta p / (1-p))
\]
and
\[
  f''(\delta) = \frac{p}{(1+\delta)(1-p - \delta p)} \geq 
  \frac{p}{1+\delta}. 
\]
By Taylor's theorem, we have
\[
  f(\delta) = f(0) + \delta f'(0) + \frac{\delta^2}{2} f''(\xi), 
\]
for some $\xi \in [0,\delta]$. Since $f(0) = f'(0) = 0$, it follows that
\[
  f(\delta) =  \frac{\delta^2}{2} f''(\xi) \geq 
  \frac{\delta^2p}{2(1+\xi)} \geq \frac{\delta^2p}{2(1+\delta)}.
\]
For $\delta \geq 1$, we have $\delta/(1+\delta) \geq 1/2$, for
$\delta < 1$, we have $1/(\delta+1) \geq 1/2$. This gives, for 
all $\delta \geq 0$,
\[
  f(\delta) \geq  \min\{\delta^2, \delta\}p/4,
\]
and the claim follows.
\end{proof}

The following corollary combines the two bounds. This variant 
can be found, e.g., in the book by Arora and Barak~\cite{AroraBa09}.

\begin{corollary}
Let $X_1, \dots, X_n$ be independent random variables with 
$X_i \in \{0,1\}$ and $\Pr[X_i = 1] = p$, for $i = 1, \dots n$. 
Set $X \eqdef \sum_{i=1}^n X_i$ and $\mu = pn$.  Then, for any 
$\delta > 0$, we have
\[
 \Pr[|X - \mu| \geq \delta\mu ] \leq 2e^{-\min\{\delta^2, \delta\}\mu/4}.
\]
\end{corollary}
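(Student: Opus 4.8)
The plan is to combine Corollary~\ref{cor:handy_lower} and Corollary~\ref{cor:handy_upper} via a simple union bound. First I would observe that, since $\delta > 0$, the events $\{X \geq (1+\delta)\mu\}$ and $\{X \leq (1-\delta)\mu\}$ are disjoint (the case $\mu = 0$ is trivial, as then $X = 0$ almost surely and the right-hand side is at least $2$), so
\[
  \Pr[|X - \mu| \geq \delta\mu] = \Pr[X \geq (1+\delta)\mu] + \Pr[X \leq (1-\delta)\mu].
\]
It then suffices to bound each of the two terms on the right by $e^{-\min\{\delta^2,\delta\}\mu/4}$, and the factor $2$ in the statement comes from adding the two estimates.

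For the upper tail this is immediate: Corollary~\ref{cor:handy_upper} already gives $\Pr[X \geq (1+\delta)\mu] \leq e^{-\min\{\delta^2,\delta\}\mu/4}$ for every $\delta \geq 0$. For the lower tail I would split into cases according to whether $\delta$ is below or above $1$. If $\delta \in (0,1)$, then $\min\{\delta^2,\delta\} = \delta^2$, and Corollary~\ref{cor:handy_lower} gives $\Pr[X \leq (1-\delta)\mu] \leq e^{-\delta^2\mu/2} \leq e^{-\delta^2\mu/4}$, as required. If $\delta > 1$, then $(1-\delta)\mu < 0 \leq X$, so $\Pr[X \leq (1-\delta)\mu] = 0$ and there is nothing to prove. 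The boundary case $\delta = 1$ I would handle by hand: $\Pr[X \leq 0] = \Pr[X = 0] = (1-p)^n \leq e^{-pn} = e^{-\mu}\leq e^{-\mu/4}$, using $\ln(1-p) \leq -p$; since $\min\{1,1\} = 1$, this matches the desired bound.

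The only mild subtlety — and the one spot where a little care is needed — is the lower tail for $\delta \geq 1$, where Corollary~\ref{cor:handy_lower} does not formally apply because it assumes $\delta \in (0,1)$; but, as just noted, this regime is either vacuous ($\delta > 1$) or a one-line direct estimate ($\delta = 1$). Adding the upper- and lower-tail bounds then yields $\Pr[|X - \mu| \geq \delta\mu] \leq 2e^{-\min\{\delta^2,\delta\}\mu/4}$, completing the proof. No optimization or calculus is required here — all of the real work has already been done in Corollaries~\ref{cor:handy_lower} and \ref{cor:handy_upper}.
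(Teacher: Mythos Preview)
Your proposal is correct and follows exactly the approach the paper takes: the paper's proof is the single line ``Combine Corollaries~\ref{cor:handy_lower} and~\ref{cor:handy_upper},'' and you have simply spelled out that combination (union/sum of the two tail events) with explicit handling of the edge cases $\delta \geq 1$ and $\mu = 0$ that the paper leaves implicit.
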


\begin{proof}
Combine Corollaries~\ref{cor:handy_lower} and~\ref{cor:handy_upper}.
\end{proof}

The following corollary, which appears, e.g., in the book by
Motwani and Raghavan~\cite{MotwaniRa95}, is also sometimes useful.
\begin{corollary}
  Let $X_1, \dots, X_n$ be independent random variables with 
  $X_i \in \{0,1\}$ and $\Pr[X_i = 1] = p$, for $i = 1, \dots n$. 
  Set $X \eqdef \sum_{i=1}^n X_i$ and $\mu = pn$. For $t \geq 2e\mu$, we 
  have
  \[
    \Pr[X \geq t ] \leq 2^{-t}.
  \]
\end{corollary}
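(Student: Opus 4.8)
The plan is to derive this bound directly from the multiplicative upper tail in Corollary~\ref{cor:MR}. Write $t = (1+\delta)\mu$; then the hypothesis $t \geq 2e\mu$ becomes $\delta \geq 2e - 1$, so in particular $\delta \geq 1$, which is exactly the regime in which the factor $e^\delta/(1+\delta)^{1+\delta}$ is small. (If $\mu = 0$ and $t > 0$ the statement is trivial since $X = 0$; if $\mu = 0$ and $t = 0$ it is an equality; and if $t > n$ then $\Pr[X \geq t] = 0$ since $X \leq n$. So we may assume $0 < \mu$ and $t \leq n$ and invoke Corollary~\ref{cor:MR} with this $\delta$.)

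First I would apply Corollary~\ref{cor:MR} to obtain
\[
  \Pr[X \geq t] = \Pr[X \geq (1+\delta)\mu] \leq
  \left(\frac{e^\delta}{(1+\delta)^{1+\delta}}\right)^{\mu}.
\]
Next I would use the crude bound $e^\delta \leq e^{1+\delta}$ on the numerator, which turns the base into $\bigl(e/(1+\delta)\bigr)^{1+\delta}$, so that
\[
  \Pr[X \geq t] \leq \left(\frac{e}{1+\delta}\right)^{(1+\delta)\mu}
    = \left(\frac{e}{1+\delta}\right)^{t},
\]
using $(1+\delta)\mu = t$. Finally, since $t \geq 2e\mu$ gives $1+\delta = t/\mu \geq 2e$, we get $e/(1+\delta) \leq 1/2$, hence $\Pr[X \geq t] \leq 2^{-t}$, as claimed.

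There is no serious obstacle here: the argument is a one-line substitution into Corollary~\ref{cor:MR} followed by the elementary inequality $e^\delta \leq e^{1+\delta}$, which lets the leftover factor $e^{1}$ be absorbed into the exponent. The only point requiring a moment's care is the bookkeeping of the degenerate cases ($\mu = 0$, or $t$ exceeding $n$), which are disposed of by the trivial bound $X \leq n$.
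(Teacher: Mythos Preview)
Your argument is correct and is essentially identical to the paper's own proof: apply Corollary~\ref{cor:MR}, bound $e^{\delta} \leq e^{1+\delta}$ to rewrite the base as $(e/(1+\delta))^{(1+\delta)\mu}$, and then use $1+\delta \geq 2e$. Your treatment of the degenerate cases ($\mu = 0$, $t > n$) is a nice touch that the paper leaves implicit.
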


\begin{proof}
By Corollary~\ref{cor:MR}
\[    
  \Pr[X \geq (1+\delta)\mu ] \leq \left(\frac{e^{\delta}}{(1+\delta)^{1+\delta}}\right)^\mu 
    \leq \left(\frac{e}{1+\delta}\right)^{(1+\delta)\mu}.
\]
For $\delta \geq 2e-1$, the denominator in the right hand side is at least $2e$, and the
claim follows.
\end{proof}

\section{Generalizations}

We mention a few generalizations of the proof techniques
for Section~\ref{sec:proofs}. Since the consequences from
Section~\ref{sec:conseq} are based on simple algebraic manipulation
of the bounds, the same consequences also hold for the generalized
settings.

\subsection{Hoeffding Extension}

The moment method (Section~\ref{sec:moment}) yields many generalizations
of Theorem~\ref{thm:chernoff}. The following result is known as
\emph{Hoeffding's extension}~\cite{Hoeffding63}. It shows that
the $X_i$ can actually be chosen to be continuous with varying 
expectations.
\begin{theorem}\label{thm:chernoff_hoeff}
  Let $X_1, \dots, X_n$ be independent random variables with 
  $X_i \in [0,1]$ and $\EX[X_i] = p_i$.
  Set $X \eqdef \sum_{i=1}^n X_i$ and $p \eqdef (1/n)\sum_{i=1}^n p_i$.
  Then, for any $t \in [0, 1-p]$, we have
  \[
    \Pr[X \geq (p+t)n ] \leq e^{-\DKL(p+t \| p)n}.
  \]
\end{theorem}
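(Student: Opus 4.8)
The plan is to run the moment method of Section~\ref{sec:moment} essentially verbatim, replacing only the two spots where that argument exploited the specific structure $X_i \in \{0,1\}$, $\Pr[X_i = 1] = p$. Fix a parameter $\lambda > 0$, apply Markov's inequality to $e^{\lambda X}$, and use independence to get $\EX[e^{\lambda X}] = \prod_{i=1}^n \EX[e^{\lambda X_i}]$. The whole task then reduces to proving the single inequality $\EX[e^{\lambda X}] \leq (pe^\lambda + 1-p)^n$; once this is in hand, we land at inequality~(\ref{equ:lambdabound}), optimize over $\lambda$, and read off the bound $e^{-\DKL(p+t\|p)n}$ exactly as before, since that optimization depends on $p$ alone.

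First I would bound the individual moment generating functions using convexity. Because $x \mapsto e^{\lambda x}$ is convex, for every $x \in [0,1]$ we have $e^{\lambda x} \leq (1-x)e^{0} + x e^{\lambda} = 1 - x + x e^{\lambda}$. Taking expectations and using $\EX[X_i] = p_i$ gives $\EX[e^{\lambda X_i}] \leq p_i e^{\lambda} + 1 - p_i$. This single line is the only place where the continuous range $[0,1]$ of the $X_i$ is used, and it is precisely the step that held with equality in the proof of Theorem~\ref{thm:chernoff}.

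Next I would eliminate the dependence on the individual $p_i$ via concavity of $\ln$. Applying Jensen's inequality to the $n$ numbers $p_i e^\lambda + 1 - p_i$,
\[
  \frac{1}{n}\sum_{i=1}^n \ln\!\bigl(p_i e^\lambda + 1 - p_i\bigr)
   \leq \ln\!\left(\frac{1}{n}\sum_{i=1}^n \bigl(p_i e^\lambda + 1 - p_i\bigr)\right)
   = \ln\!\bigl(p e^\lambda + 1 - p\bigr),
\]
using $\frac1n\sum_i p_i = p$. Exponentiating and multiplying through by $n$ yields $\prod_{i=1}^n (p_i e^\lambda + 1 - p_i) \leq (p e^\lambda + 1 - p)^n$. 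Combining this with the per-variable bound from the previous step gives $\EX[e^{\lambda X}] \leq (p e^\lambda + 1 - p)^n$, and the argument closes exactly as in Section~\ref{sec:moment}.

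I do not expect a genuine obstacle: both new ingredients — the convexity bound $e^{\lambda x} \leq 1 - x + xe^\lambda$ and the $\ln$-concavity (AM--GM) reduction — are one-line estimates, and everything downstream of $\EX[e^{\lambda X}] \leq (pe^\lambda+1-p)^n$ is literally the computation already carried out for Theorem~\ref{thm:chernoff}. The one point worth stating carefully is that the optimal $\lambda$ and the final exponent involve only the average $p$ and not the individual $p_i$; the concavity step is exactly what makes this legitimate.
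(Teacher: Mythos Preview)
Your proposal is correct and essentially identical to the paper's proof: the paper also applies Markov to $e^{\lambda X}$, uses the convexity bound $e^{\lambda z}\le (1-z)+ze^\lambda$ to get $\EX[e^{\lambda X_i}]\le 1-p_i+p_ie^\lambda$, and then invokes the arithmetic--geometric mean inequality (your Jensen/$\ln$-concavity step) to reduce to $(1-p+pe^\lambda)^n$ before finishing as in Section~\ref{sec:moment}. The only cosmetic difference is that the paper names AM--GM directly rather than phrasing it via concavity of $\ln$.
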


\begin{proof}
  Let $\lambda > 0$ a parameter to be determined later. As before,
  Markov's inequality yields
  \[
  \Pr\bigl[e^{\lambda X} \geq e^{\lambda (p+t)n} \bigr] \leq
    \frac{\EX[e^{\lambda X}]}{e^{\lambda (p+t)n}}.
  \]
  Using independence, we get
  \begin{equation}\label{equ:boundExp}
    \EX[e^{\lambda X}] = \EX\Bigl[e^{\lambda \sum_{i=1}^n X_i}\Bigr]
    =  \prod_{i=1}^n \EX\Bigl[e^{\lambda X_i}\Bigr].
  \end{equation}
  Now we need to estimate $\EX\bigl[e^{\lambda X_i}\bigr]$.
  The function $z \mapsto e^{\lambda z}$ is convex,
  so $e^{\lambda z} 
  \leq (1-z) e^{0 \cdot \lambda} + z e^{1 \cdot \lambda}$ for 
  $z \in [0,1]$. Hence,
  \[
    \EX\bigl[e^{\lambda X_i}\bigr] \leq \EX[1-X_i + X_i e^\lambda] = 
      1-p_i + p_ie^\lambda.
  \]
  Going back to (\ref{equ:boundExp}),
  \[
    \EX[e^{\lambda X}] \leq 
      \prod_{i=1}^n (1-p_i + p_i e^\lambda).
  \]
  Using the arithmetic-geometric mean inequality $\prod_{i=1}^n x_i \leq
    \bigl((1/n)\sum_{i=1}^n x_i\bigr)^n$, for $x_i \geq 0$, this is
  \[
    \EX[e^{\lambda X}] \leq (1-p+pe^\lambda)^n.
  \]
  From here we continue as in Section~\ref{sec:moment}.
\end{proof}

\subsection{Hypergeometric Distribution}

Chv\'atals proof~\cite{Chvatal79} from Section~\ref{sec:chvatal} 
generalizes to the \emph{hypergeometric} distribution.
We emphasize once again that this means that all the corollaries
from Section~\ref{sec:conseq} also apply to this case.
\begin{theorem}\label{thm:hyper}
  Suppose we have an urn with $N$ balls, $P$ of which are red.
  We randomly draw $n$ balls from the urn without replacement.
  Let $H(N,P,n)$ denote the number of red balls in the sample.
  Set $p \eqdef P/N$.
  Then, for any $t \in [0, 1-p]$, we have
  \[
    \Pr\big[H(N,P,n) \geq (p+t)n \big] \leq e^{-\DKL(p+t \| p)n}.
  \]
\end{theorem}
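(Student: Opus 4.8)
The plan is to mimic Chv\'atal's proof from Section~\ref{sec:chvatal}, replacing the Binomial probabilities with hypergeometric ones. First I would write, for $\tau \geq 1$ and $k = (p+t)n$,
\[
  \Pr[H(N,P,n) \geq k] = \sum_{i \geq k} \Pr[H(N,P,n) = i]
  \leq \sum_{i = 0}^{n} \Pr[H(N,P,n) = i]\, \tau^{i-k}
  = \tau^{-k}\, \EX\bigl[\tau^{H(N,P,n)}\bigr],
\]
using exactly the same ``pad with nonnegative terms, inflate the relevant terms by $\tau^{i-k} \geq 1$'' trick as before. So everything reduces to bounding the factorial moment generating function $\EX[\tau^{H(N,P,n)}]$ and then choosing $\tau$.

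The key step is to show that $\EX[\tau^{H(N,P,n)}] \leq (p\tau + 1-p)^n$, i.e.\ that the hypergeometric $\tau$-MGF is dominated by the Binomial one with the same mean. Then plugging in gives $\Pr[H(N,P,n)\geq (p+t)n] \leq \bigl((p\tau+1-p)/\tau^{p+t}\bigr)^n$, which is once again inequality~(\ref{equ:lambdabound}) with $\tau = e^\lambda$, and the proof finishes by optimizing $\tau$ exactly as in Section~\ref{sec:moment}. There are a couple of standard ways to establish the domination. One is to use the representation of $H(N,P,n)$ as a sum $H = \sum_{j=1}^{n} Y_j$ of indicators for a uniformly random $n$-subset (or a random sequence of $n$ draws without replacement); the $Y_j$ are negatively associated and each has mean $p$, and negative association implies $\EX[\prod_j \tau^{Y_j}] \leq \prod_j \EX[\tau^{Y_j}] = (p\tau + 1-p)^n$ for the increasing functions $y \mapsto \tau^{y}$. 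Alternatively, one can argue directly by a sampling/coupling argument: reveal the draws one at a time and compare the conditional factor $\EX[\tau^{Y_{j}} \mid Y_1,\dots,Y_{j-1}]$ to $p\tau + 1-p$, using that after conditioning the remaining red fraction is still an average that behaves correctly, together with a convexity argument; Hoeffding's original 1963 paper in fact proves that $\EX[f(H)] \leq \EX[f(B(n,p))]$ for every convex $f$, which applies to $f(x) = \tau^x$.

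I expect the main obstacle to be giving a clean, self-contained justification of the moment domination $\EX[\tau^{H(N,P,n)}] \leq (p\tau+1-p)^n$, since both routes (negative association or the convex-order result of Hoeffding) rely on a small lemma that the survey would either need to cite or prove. Everything else --- the padding step, the Binomial-theorem-style collapse, and the optimization over $\tau$ --- is identical to what has already been done twice, so once the domination is in hand the theorem follows with no new calculation. A reasonable write-up would state the domination as a lemma, prove it via the ``draw the balls one at a time'' coupling, and then point to Section~\ref{sec:chvatal} for the remainder.
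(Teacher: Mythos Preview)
Your reduction is exactly the paper's: both arrive at the inequality $\EX\bigl[\tau^{H(N,P,n)}\bigr] \leq (p\tau + 1-p)^n$ as the only new ingredient beyond Section~\ref{sec:chvatal}, and both finish identically from there. Where you differ is in how this domination is justified. You propose negative association of the draw indicators, or Hoeffding's convex-order theorem; both are correct but import an outside result. The paper instead gives a self-contained combinatorial proof: it first shows (Claim~\ref{clm:hyper_bound}) that for every $j$ the binomial moment $\EX\bigl[\binom{H}{j}\bigr]$ is at most $\binom{n}{j}p^j$, by computing in two ways the expected number of all-red $j$-subsets of the sample (for a fixed set of $j$ positions, the probability they are all red is $\frac{P}{N}\cdot\frac{P-1}{N-1}\cdots\frac{P-j+1}{N-j+1}\leq p^j$). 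Then it expands $\tau^i = (1+(\tau-1))^i$ by the binomial theorem, swaps the order of summation, and applies the binomial-moment bound term by term to obtain $(1+(\tau-1)p)^n$ (Claim~\ref{clm:hyper_bound2}). This is Chv\'atal's original argument and needs nothing beyond counting. Your route via negative association is shorter to state but hides the work in the negative-association lemma; the paper's factorial-moment computation is a bit longer but entirely elementary. One caution on your third suggestion: the ``reveal one draw at a time and compare the conditional factor to $p\tau+1-p$'' sketch does not go through as a naive telescoping, since after some histories the remaining red fraction exceeds $p$; Hoeffding's actual proof of the convex-order inequality is more delicate than that.
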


\begin{proof}
It is well known that
\[
\Pr[H(N,P,n) = l] = \binom{P}{l}\binom{N-p}{n-l}\binom{N}{l}^{-1},
\]
for $l = 0, \dots, n$. 
\begin{claim}\label{clm:hyper_bound}
For every $j \in \{0, \dots, n\}$, we have
\[
\binom{N}{n}^{-1} \sum_{i = j}^n \binom{P}{i}\binom{N-P}{n - i}\binom{i}{j}
\leq \binom{n}{j}p^j.
\]
\end{claim}
\begin{proof}
Consider the following random experiment: take a random permutation of the $N$
balls in the urn. Let $S$ be the sequence of the first $n$ elements in the permutation.
Let $X$ be the number of $j$-subsets of $S$ that contain only red balls.
We compute $\EX[X]$ in two different ways. On the one hand, 
\begin{equation}\label{equ:hyper_bound_var1}
\EX[X] = \sum_{i = j}^n \Pr[\text{S contains $i$ red balls}] \binom{i}{j}
  = \sum_{i = j}^n \binom{N}{n}^{-1} \binom{P}{i} \binom{N-P}{n-i} \binom{i}{j}.
\end{equation}
On the other hand, let $I \subseteq \{1, \dots, n\}$ with 
$|I| = j$. Then the probability that all the balls in the positions
indexed by $I$ are red is
\[
\frac{P}{N} \cdot \frac{P-1}{N-1} \cdot \cdots \cdot
\frac{P-j+1}{N-j+1} \leq \left(\frac{P}{N}\right)^j = p^j. 
\]
Thus, by linearity of expectation $\EX[X] \leq \binom{n}{j} p^j$. Together with
(\ref{equ:hyper_bound_var1}), the claim follows.
\end{proof}
\begin{claim}\label{clm:hyper_bound2}
For every $\tau \geq 1$, we have
\[
\binom{N}{n}^{-1} \sum_{i = 0}^n \binom{P}{i}\binom{N-P}{n - i}\tau^i
\leq (1 + (\tau-1)p)^n.
\]
\end{claim}
\begin{proof}
Using Claim~\ref{clm:hyper_bound} and the Binomial theorem (twice),
\begin{align*}
\binom{N}{n}^{-1} \sum_{i = 0}^n \binom{P}{i}\binom{N-P}{n - i}\tau^i
&= \binom{N}{n}^{-1} \sum_{i = 0}^n \binom{P}{i}\binom{N-P}{n - i}
  (1-(\tau-1))^i\\
&= \binom{N}{n}^{-1} \sum_{i = 0}^n \binom{P}{i}\binom{N-P}{n - i}
  \sum_{j=0}^i \binom{i}{j}(\tau-1)^j\\ 
&= \binom{N}{n}^{-1} \sum_{j = 0}^n (\tau-1)^j \sum_{i=j}^n 
  \binom{P}{i}\binom{N-P}{n - i}\binom{i}{j}\\
&\leq \sum_{j = 0}^n \binom{n}{j}((\tau-1)p)^j = (1 + (\tau-1)p)^n,
\end{align*}
as claimed.
\end{proof}

Thus, for any $\tau \geq 1$ and $k \geq pn$, we get as before
\begin{multline*}
\Pr[H(N,P,n)\geq k] = \binom{N}{n}^{-1}\sum_{i=k}^n \binom{P}{i}\binom{N-P}{n-i}\\
\leq \binom{N}{n}^{-1}\sum_{i=0}^n \binom{P}{i}\binom{N-P}{n-i}
\tau^{i-k}\leq
\frac{(p\tau+1-p)^n}{\tau^k},
\end{multline*}
by Claim~\ref{clm:hyper_bound2}. From here the proof proceeds as in 
Section~\ref{sec:chvatal}.
\end{proof}

\subsection{Negative Correlations}

The proof by Impagliazzo and Kabanets~\cite{ImpagliazzoKa10} from
Section~\ref{sec:ik} can be used to relax the independence assumption.
It now suffices that the random variables are \emph{negatively
correlated}.
\begin{theorem}\label{thm:general_ik}
  Let $X_1, \dots, X_n$ be random variables with $X_i \in \{0,1\}$.
  Suppose there exist $p_i \in [0,1]$, $i = 1, \dots, n$, such 
  that for every index set $I \subseteq \{1, \dots, n\}$, we
  have $\EX\big[\prod_{i \in I} X_i \big] \leq \prod_{i \in I} p_i$.
  Set $X \eqdef \sum_{i=1}^n X_i$ and $p \eqdef (1/n)\sum_{i=1}^n p_i$.
  Then, for any $t \in [0, 1-p]$, we have
  \[
    \Pr[X \geq (p+t)n ] \leq e^{-\DKL(p+t \| p)n}.
  \]
\end{theorem}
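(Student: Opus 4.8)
The plan is to mimic the Impagliazzo--Kabanets argument from Section~\ref{sec:ik} verbatim, observing that it never used full independence of the $X_i$, only the hypothesis on expectations of products that is now assumed directly. So I would fix $\lambda \in [0,1]$, form the random index set $I$ including each $i$ independently with probability $\lambda$ (note the randomness of $I$ is genuinely independent of the $X_i$, which is all we need), and estimate $\EX\bigl[\prod_{i\in I} X_i\bigr]$ two ways.

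For the upper bound, I would condition on $I = S$ and use the negative-correlation hypothesis: $\EX\bigl[\prod_{i\in S} X_i\bigr] \leq \prod_{i\in S} p_i$. Then
\[
  \EX\Bigl[\prod_{i\in I} X_i\Bigr]
  \leq \sum_{S \subseteq \{1,\dots,n\}} \lambda^{|S|}(1-\lambda)^{n-|S|}\prod_{i\in S} p_i
  = \prod_{i=1}^n\bigl(\lambda p_i + 1-\lambda\bigr)
  \leq (\lambda p + 1-\lambda)^n,
\]
where the last step is the arithmetic--geometric mean inequality, exactly as in the proof of Theorem~\ref{thm:chernoff_hoeff}. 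This is the one new ingredient relative to Section~\ref{sec:ik}: there the product $\prod_i(\lambda p + 1-\lambda)$ was already uniform, whereas here the $p_i$ differ and must be averaged. For the lower bound, the argument is unchanged: conditioning on $X \geq (p+t)n$ and then on a specific outcome $x_1,\dots,x_n$, the conditional expectation of $\prod_{i\in I} x_i$ is the probability that $I$ misses the (at most) $(1-p-t)n$ zero-coordinates, which is $(1-\lambda)^{n-X} \geq (1-\lambda)^{(1-p-t)n}$; note this step uses nothing about the joint law of the $X_i$, only that $I$ is drawn independently. Combining the two bounds gives exactly inequality~(\ref{equ:ikbound}), and optimizing over $\lambda = t/\bigl((1-p)(p+t)\bigr)$ yields $e^{-\DKL(p+t\|p)n}$ as before.

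I do not expect a serious obstacle; the only subtlety worth flagging is making sure the conditional-expectation manipulation is legitimate when the $X_i$ are dependent. Since $I$ is sampled independently of $(X_1,\dots,X_n)$, we have $\EX\bigl[\prod_{i\in I} X_i \mid X_1=x_1,\dots,X_n=x_n\bigr] = \EX_I\bigl[\prod_{i\in I} x_i\bigr]$ for each fixed outcome, and the law of total expectation then applies over the (now possibly dependent) distribution of the $X_i$; the inequality $\EX[\,\cdot \mid X \geq (p+t)n\,]\cdot\Pr[X\geq(p+t)n] \leq \EX[\,\cdot\,]$ holds because the product $\prod_{i\in I}X_i$ is nonnegative. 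Everything else is a line-by-line transcription of Section~\ref{sec:ik} with the AM--GM step inserted, so the proof can simply say ``proceed exactly as in Section~\ref{sec:ik}, using the arithmetic--geometric mean inequality as in the proof of Theorem~\ref{thm:chernoff_hoeff} to bound $\prod_{i=1}^n(\lambda p_i + 1-\lambda) \leq (\lambda p + 1-\lambda)^n$.''
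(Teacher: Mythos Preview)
Your proposal is correct and matches the paper's proof essentially line for line: the paper also replaces the independence step in the upper bound by the negative-correlation hypothesis, applies the AM--GM inequality to pass from $\prod_i(1-\lambda+\lambda p_i)$ to $(1-\lambda+\lambda p)^n$, notes that the lower bound from Section~\ref{sec:ik} goes through unchanged, and then refers back to Section~\ref{sec:ik} for the optimization over $\lambda$. Your added remark about why the conditional-expectation step remains valid under dependence is a nice clarification that the paper leaves implicit.
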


\begin{proof}
Let $\lambda \in [0,1]$ be a parameter to be chosen later. 
Let $I \subseteq \{1, \dots, n\}$ be a random index set obtained by
including each element $i \in \{1, \dots, n\}$ with probability 
$\lambda$.
As before, we estimate the expectation
$\EX\bigl[\prod_{i \in I} X_i \bigr]$ in two different ways, where
the expectation is over the random choice of $X_1, \dots, X_n$ and $I$.
Similarly to before,
\begin{multline}\label{equ:ikupper_gen}
\EX\Bigl[\prod_{i \in I} X_i \Bigr]
= \sum_{S \subseteq \{1, \dots, n\}} 
     \Pr[I = S] \cdot \EX \Bigl[ \prod_{i \in S} X_i \Bigr]
\leq \sum_{S \subseteq \{1, \dots, n\}} \lambda^{|S|}(1-\lambda)^{n-|S|} 
\cdot \Big(\prod_{i \in S} p_i \Big)\\
= \sum_{S \subseteq \{1, \dots, n\}} 
\Big(\prod_{i \in S} \lambda p_i\Big)
\Big(\prod_{i \in \{1, \dots, n\} \setminus S} 
(1-\lambda)\Big)  =
\prod_{i=1}^n (1-\lambda + p_i \lambda)
\leq (1-\lambda + p \lambda)^n,
\end{multline}
by the 
arithmetic-geometric mean inequality.
The proof of the lower bound remains unchanged and yields
\[
\EX\Bigl[\prod_{i \in I} X_i\Bigr]
 \geq
(1-\lambda)^{(1-p-t)n}\Pr[X \geq (p+t)n],
\]
as before. Combining with (\ref{equ:ikupper_gen}) and optimizing
for $\lambda$ finishes the proof, see Section~\ref{sec:ik}.
\end{proof}

\paragraph*{Acknowledgments.}
This survey is based on lecture notes for a class
on advanced algorithms at Freie Universit\"at Berlin.
I would like to thank all the students who took
this class for their interest and participation. I would also
like to thank Nabil Mustafa and Jonathan Ullman for 
valuable comments that improved this survey.

\bibliographystyle{abbrv}
\bibliography{chernoff}

\newcommand{\SortNoop}[1]{}
\begin{thebibliography}{10}

\bibitem{AlonSp16}
N.~Alon and J.~Spencer.
\newblock {\em The Probabilistic Method}.
\newblock Wiley-Interscience, 2016.

\bibitem{AroraBa09}
S.~Arora and B.~Barak.
\newblock {\em Computational Complexity -- A Modern Approach}.
\newblock Cambridge University Press, 2009.

\bibitem{Azuma67}
K.~Azuma.
\newblock Weighted sums of certain dependent random variables.
\newblock {\em T\^ohoku Math. J. (2)}, 19:357--367, 1967.

\bibitem{Bernstein64}
S.~N. Bernstein.
\newblock {\em Sobranie Sochinenii [Collected Works]}.
\newblock Nauka, Moscow, 1964.

\bibitem{Chen13}
X.~Chen.
\newblock A likelihood ratio approach for probabilistic inequalities.
\newblock \texttt{arXiv:1308.4123}, 2013.

\bibitem{ChungLu06}
F.~R.~K. Chung and L.~Lu.
\newblock Concentration inequalities and martingale inequalities: {A} survey.
\newblock {\em Internet Mathematics}, 3(1):79--127, 2006.

\bibitem{Chvatal79}
V.~Chv{\'{a}}tal.
\newblock The tail of the hypergeometric distribution.
\newblock {\em Discrete Mathematics}, 25(3):285--287, 1979.

\bibitem{CormenLeRiSt09}
T.~H. Cormen, C.~E. Leiserson, R.~L. Rivest, and C.~Stein.
\newblock {\em Introduction to Algorithms}.
\newblock MIT Press, 3rd edition, 2009.

\bibitem{CoverTo06}
T.~M. Cover and J.~A. Thomas.
\newblock {\em Elements of information theory}.
\newblock Wiley-Interscience, 2en edition, 2006.

\bibitem{DubhashiPa09}
D.~P. Dubhashi and A.~Panconesi.
\newblock {\em Concentration of Measure for the Analysis of Randomized
  Algorithms}.
\newblock Cambridge University Press, 2009.

\bibitem{DworkRo14}
C.~Dwork and A.~Roth.
\newblock The algorithmic foundations of differential privacy.
\newblock {\em Foundations and Trends in Theoretical Computer Science},
  9(3-4):211--407, 2014.

\bibitem{Goldreich08}
O.~Goldreich.
\newblock {\em Computational complexity -- a conceptual perspective}.
\newblock Cambridge University Press, 2008.

\bibitem{HagerupRu90}
T.~Hagerup and C.~R{\"{u}}b.
\newblock A guided tour of {C}hernoff bounds.
\newblock {\em Inform. Process. Lett.}, 33(6):305--308, 1990.

\bibitem{Hoeffding63}
W.~Hoeffding.
\newblock Probability inequalities for sums of bounded random variables.
\newblock {\em J. Amer. Statist. Assoc.}, 58:13--30, 1963.

\bibitem{ImpagliazzoKa10}
R.~Impagliazzo and V.~Kabanets.
\newblock Constructive proofs of concentration bounds.
\newblock In {\em Proc. 13th Int. Conf. Approx. (APPROX) and 14th Int. Conf.
  Rand. Comb. Opt. (RANDOM)}, pages 617--631, 2010.

\bibitem{KleinbergTa06}
J.~M. Kleinberg and {\'{E}}.~Tardos.
\newblock {\em Algorithm design}.
\newblock Addison-Wesley, 2006.

\bibitem{McDiarmid98}
C.~McDiarmid.
\newblock Concentration.
\newblock In {\em Probabilistic methods for algorithmic discrete mathematics},
  volume~16 of {\em Algorithms Combin.}, pages 195--248. Springer-Verlag, 1998.

\bibitem{McSherryTa07}
F.~McSherry and K.~Talwar.
\newblock Mechanism design via differential privacy.
\newblock In {\em Proc. 48th Annu. IEEE Symp. Found. Comput. Sci. (FOCS)},
  pages 94--103, 2007.

\bibitem{MitzenmacherUp17}
M.~Mitzenmacher and E.~Upfal.
\newblock {\em Probability and Computing: Randomization and Probabilistic
  Techniques in Algorithms and Data Analysis}.
\newblock Cambridge University Press, 2nd edition, 2017.

\bibitem{MorinMuRe17}
P.~Morin, W.~Mulzer, and T.~Reddad.
\newblock Encoding arguments.
\newblock {\em ACM Comput. Surv.}, 50(3):46:1--46:36, 2017.

\bibitem{MotwaniRa95}
R.~Motwani and P.~Raghavan.
\newblock {\em Randomized Algorithms}.
\newblock Cambridge University Press, 1995.

\bibitem{SteinkeUl17}
T.~Steinke and J.~Ullman.
\newblock Subgaussian tail bounds via stability arguments.
\newblock \texttt{arXiv:1701.03493}, 2017.

\end{thebibliography}
\end{document}